\newtheorem{theorem}{Theorem}[section]
\newtheorem{corollary}{Corollary}[theorem]
\newtheorem{lemma}[theorem]{Lemma}
\newtheorem{proposition}[theorem]{Proposition}
\newcommand{\ket}[1]{\left| {#1} \right\rangle}
\newcommand{\ketbra}[2]{\left| {#1} \right\rangle \!\! \left\langle {#2} \right|}
\newcommand{\prjct}[1]{\mathinner{|{#1}\rangle}\!\!\mathinner{\langle{#1}|}}
\newcommand{\tr}[1]{\mbox{$\mathrm{Tr}\left(#1\right)$}}
\newcommand{\Tr}[1]{\mathrm{Tr}\!\left(#1\right)}
\newcommand{\Id}{\mathds{1}}
\newcommand{\ii}{\mathrm{i}}
\newcommand{\cB}{\mathcal{B}}
\newcommand{\cE}{\mathcal{E}}
\newcommand{\cF}{\mathcal{F}}
\newcommand{\cH}{\mathcal{H}}
\newcommand{\sA}{\mathscr{A}}
\newcommand{\sB}{\mathscr{B}}
\renewcommand{\t}[1]{\mathrm{#1}}
\newcommand{\be}{\begin{equation}}
\newcommand{\ee}{\end{equation}}
\newcommand{\tot}{\!\otimes\!}
\begin{document}

\title{Certifying the building blocks of quantum computers from Bell's theorem}
\author{Pavel Sekatski}
\thanks{These authors equally contributed}
\affiliation{Quantum Optics Theory Group, Universit\"at Basel, Klingelbergstraße 82, CH-4056 Basel, Switzerland}
\affiliation{Institut für Theoretische Physik, Universit\"at Innsbruck, Technikerstraße 21a, A-6020 Innsbruck, Austria}
\author{Jean-Daniel Bancal}
\thanks{These authors equally contributed}
\affiliation{Quantum Optics Theory Group, Universit\"at Basel, Klingelbergstraße 82, CH-4056 Basel, Switzerland}
\author{Sebastian Wagner}
\affiliation{Quantum Optics Theory Group, Universit\"at Basel, Klingelbergstraße 82, CH-4056 Basel, Switzerland}
\author{Nicolas Sangouard}
\affiliation{Quantum Optics Theory Group, Universit\"at Basel, Klingelbergstraße 82, CH-4056 Basel, Switzerland}

\date{\today}
\begin{abstract}
\noindent The power of quantum computers relies on the capability of their components to maintain faithfully and process accurately quantum information. Since this property eludes classical certification methods, fundamentally new protocols are required to guarantee that elementary components are suitable for quantum computation. These protocols must be device-independent, that is, they cannot rely on a particular physical description of the actual implementation if one is to qualify a block for all possible usages. Bell's theorem has been proposed to certify, in a device-independent and robust way, blocks either producing or measuring quantum states. In this manuscript, we provide the missing piece: a method based on Bell's theorem to certify coherent operations such as storage, processing and transfer of quantum information. This completes the set of tools needed to certify all building blocks of a quantum computer. Our method distinguishes itself by its robustness to experimental imperfections, and so can be readily used to certify that today's quantum devices are qualified for usage in future quantum computers.

\end{abstract}

\maketitle

Experimental research on quantum computing is progressing at an unprecedented rate~\cite{Ladd10}. Five-qubit quantum computations combining around a dozen of elementary operations -- called quantum logical gates -- can nowadays be performed with a mean gate fidelity of $\sim$98\% using trapped ions~\cite{Debnath16} or superconducting circuits~\cite{Barends14}. In the next years, an on-going race between savvy research groups worldwide is expected to lead to the first quantum computation going beyond the capabilities of the most powerful classical supercomputer~\cite{Castelvecchi17}. If one is to reach the point of designing a universal quantum computer~\cite{DiVincenzo00}, it is crucial to proceed in a scalable way and certify that each new component is qualified for use in a quantum computer, independently of the purpose for which that larger device is used. 

Such a certification must be device-independent, that is, it cannot rely on a physical description of the actual implementation. Indeed, an exhaustive model of the setup is challenging, if not impossible, to establish. Relying on any particular model therefore amounts to making assumptions about the functioning of blocks. But seemingly harmless assumptions, on the underlying Hilbert space dimension for instance, can completely corrupt conclusion related to specific applications such as quantum-based secured communication~\cite{Acin06,Lydersen10}. Blocks certified in such a manner thus cannot be used safely for arbitrary purposes.

Bell's theorem~\cite{Bell64} has lead to device-independent certification schemes for components either producing quantum states or performing quantum measurements~\cite{Popescu92,Braunstein92,Mayers04,Tomamichel13,Bardyn09,Miller13,Yang14,Kaniewski16,Bamps16,Natarajan17,Coladangelo17,Supic17}. But these are just some of the elementary blocks needed to build a quantum computer (see Figure~\ref{fig:architecture}). In particular, a device-independent method assessing the quality of components in charge of the transfer, processing and storage of quantum information is still missing. Together with existing techniques, such a method would in principle allow for the certification of all kinds of elementary building blocks needed in a quantum computer.

\begin{figure}
\includegraphics[width=0.4\textwidth]{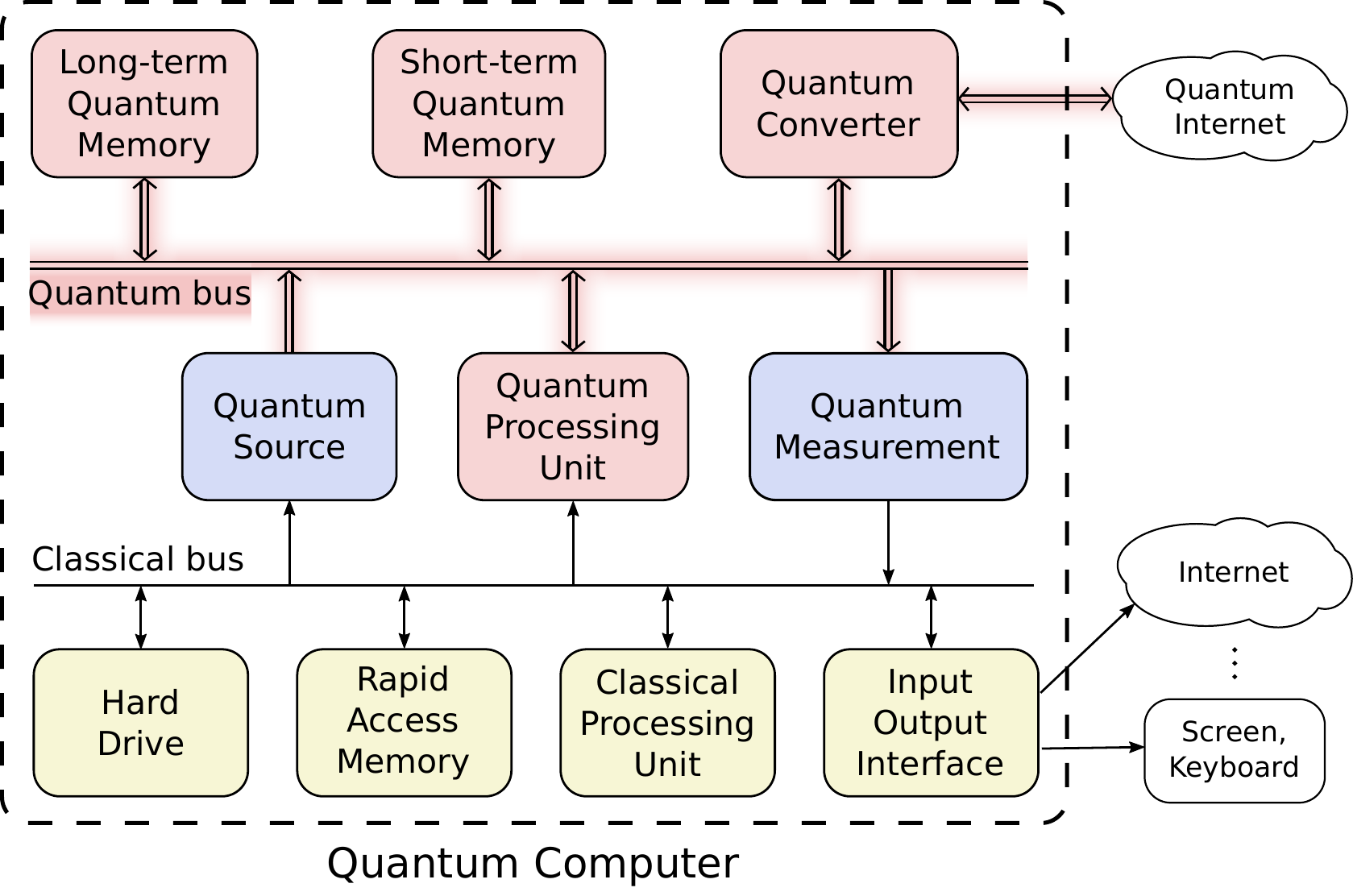}
\caption{Possible architecture of a future universal quantum computer (see also~\cite{Brennen03,Mariantoni11}). Elements in yellow are classical, and thus well characterized. Blue elements already admit device-independent certification schemes. Here, we demonstrate how to certify the components in red. In practice, several blocks may be merged into a single physical unit.}
\label{fig:architecture}
\end{figure}

Here, we fill this gap by showing how to certify a trace preserving quantum channel acting on one or several systems, that is, a general transformation taking quantum states and returning other quantum states. Our approach involves no description of the internal functioning of either the tested channel or the certification setup, but relies on the device-independent characterization of two entangled states, the first one serving as input to the channel, the second one being the output state. Interestingly, we can use state certifications that are robust to experimental imperfections to certify channels robustly.

Our goal is in sharp contrast with a line of research aiming to the certification of quantum computations~\cite{McKague13,Reichard13,Hajdusek15,Fitzsimons15,Coladangelo17b}. Our work addresses elementary blocks of a quantum computer and certify that they are qualified for use in future larger quantum devices. It builds on the work of Magniez et al.~\cite{Magniez06}, but differs in its formulation, methodology and robustness, see Appendix E for details. In particular, the robustness of our results is compatible with current technological capabilities, as we show below. In opposition to Ref.~\cite{Dall'Arno17}, we provide lower bounds on the quality of the blocks. Detailed recipes are given to certify the unitarity of one-qubit channels as well as two-qubit entangling operations. These recipes can be readily used in present-day experiments to certify transmission lines between processing and storage areas, storage devices, converters between various information carriers and arbitrary two-qubit controlled-unitary gates independently of the details and imperfections of the actual implementation.\\ 

\noindent \textit{Device-independent certification of a quantum channel--} 
We start by providing a definition of the device-independent certification of quantum channels. For this, we consider a 
scenario with two sides $\sA$ and $\sB$, each side containing potentially several parties depending on the channel to be certified. Each party performs measurements on one part of a shared state $\rho\in L(\cH_\sA\otimes\cH_\sB)$ and records the result of each experimental run. In addition, the parties on side $\sA$ have the freedom to decide whether or not to apply the channel to be certified $\cE$, an endomorphism on states in $\cH_\sA$, before performing the measurements (see Figure~\ref{fig:bigfigure} $a$ and $c$). The sources preparing the initial state, the measurement devices and the channel are treated as black boxes and the parties do not communicate with each other. The partial state prepared by the source at side $\sA$ is denoted $\rho_\sA =\text{Tr}_\sB \,\rho$.  

We say that the channel $\cE$ is certified device-independently if  the sole knowledge of the results given the measurement choices implies the existence of local isometries $\Phi_i:\cH_\sA\otimes\cH_i\to \cH_\sA \otimes \cH_i^\t{ext}$ and $\Phi_o: \cH_\sA \to \cH_o\otimes \cH_o^\t{ext}$  such that
\begin{equation}
\nonumber
(\Phi_o \circ \cE \circ \Phi_i \otimes \Id)\left[\rho_\sA\otimes\ketbra{\phi^+}{\phi^+}\right] = (\overline \cE \otimes \Id)\left[\ketbra{\phi^+}{\phi^+}\right] \otimes \rho_\text{ext}^{(i,o)},
\end{equation}
where $\overline \cE$ is the reference  channel mapping states from Hilbert space $\cH_i$ to the Hilbert space $\cH_o.$ Here, $\ket{\phi^+}$ is a maximally entangled state in $\cH_i\otimes \cH_i$, and $\rho_\text{ext}^{(i,o)}$ is some irrelevant residual state on $\cH_i^\t{ext}\otimes \cH_o^\t{ext}$. We emphasize that in device-independent certification, assumptions are made neither on the system’s state on which $\cE$ operates, nor on the dimension of the underlying Hilbert space. The local isometries $\Phi_i/\Phi_o$ identify subspaces/subsystems in which the channel $\cE$ acts exactly as the reference channel $\overline{\cE}$.

When the above equality does not hold exactly we quantify the relation between the channels $\cE$ and $\overline{\cE}$ through the following fidelity
\begin{equation}\label{eq:FE}
\cF(\cE,\overline\cE) = \underset{\Lambda_i,\Lambda_o}{\max}\ F\Big((\Lambda_o\!\circ\cE\circ\Lambda_i\otimes \Id)\left[\rho_\sA\otimes \prjct{\phi^+} \right], \overline\rho\Big).
\end{equation}
Here, $F(\rho,\sigma)=\tr{\sqrt{\sigma^{1/2}\rho\,\sigma^{1/2}}}$ is the Uhlmann fidelity. $\Id$ acts on the second half of $\ket{\phi^+}.$ $\Lambda_{i}[\cdot]= \t{Tr}_{\cH_i^\t{ext}}(\Phi_{i}[\cdot])$ traces out all degrees of freedom which are not in the preimage of $\cE$ while $\Lambda_o[\cdot]= \t{Tr}_{\cH_o^\t{ext}}(\Phi_o[\cdot])$ traces out all degrees of freedom which are not in the image of $\overline \cE.$ $\overline\rho = (\overline \cE \otimes \Id)\left[\ketbra{\phi^+}{\phi^+}\right].$ (See Fig.~\ref{fig:channelFidelity} and Appendix~A.1 and A.2 for details.)

This fidelity, which is optimized over all maps, can be understood as an extension of the Choi fidelity to device-independent scenarios. It guarantees  that the channel $\cE$ can be used to play the role of $\overline\cE$ in any circumstance with fidelity $\cF$. The maps achieving this fidelity describe the recipe for how to do that. Furthermore, the fidelity $\cF$ of Eq. \eqref{eq:FE} can be used to bound the distance between the two channels through the diamond norm, which informs us on the highest probability to distinguish the two channels in a single shot upon acting on arbitrary states~\cite{BenAroya10}, see Appendix~A.3.

\begin{figure}
\includegraphics[width=0.4\textwidth]{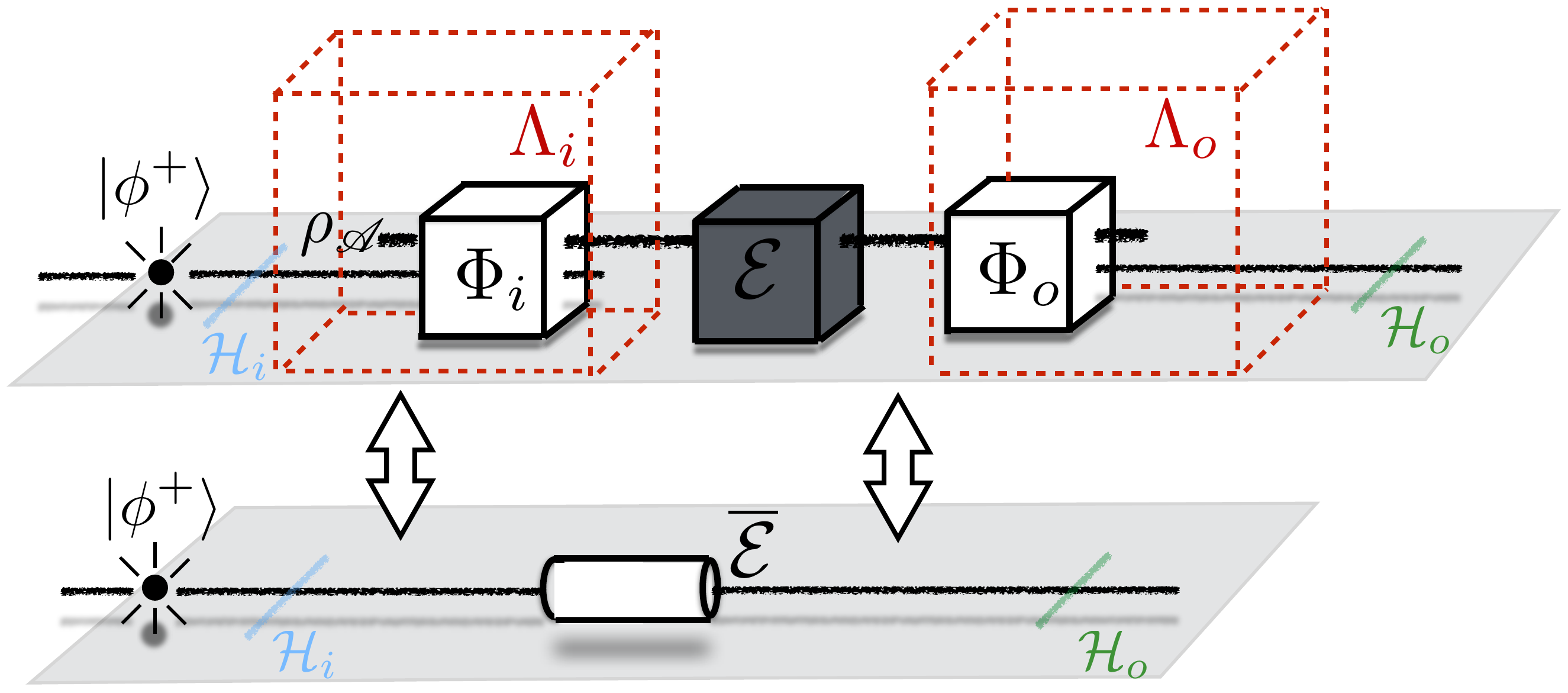}
\caption{Comparison between an unknown channel $\cE$ and a reference channel $\overline\cE$ operating on a Hilbert space $\cH_i.$ Half a maximally entangled state belonging to $\cH_i \otimes \cH_i$ is presented to $\cE$ by a local map $\Lambda_i$ that can also act on the initial quantum state $\rho_\sA$. Degrees of freedom which are not transmitted to the channel at this point are discarded. A local map $\Lambda_o$ is then used at the output of the channel $\cE$ to remove extra systems and extract the state of a subsystem to be compared with the Choi state $\overline\rho = (\overline \cE \otimes \Id)\left[\ketbra{\phi^+}{\phi^+}\right]$ of the reference channel. The channel fidelity $\cF(\cE,\overline\cE)$ is then obtained by maximizing the overlap between $\overline\rho$ and the channel output over all possible input isometries and output maps.}
\label{fig:channelFidelity}
\end{figure}

In the case where the target channel $\overline\cE$ acts on several parties, we distinguish these parties $\{A^{(k)}\}$ on the side $\sA.$ The input and output Hilbert spaces then have a tensor structure $\cH_{i/o}=\bigotimes_{k=1}^n \cH_{i/o}^{(k)}$ and the same is required from the maps $\Lambda_i$ and $\Lambda_o$, as spelled out in Appendix~A.2. \\

\noindent\textit{A practical device-independent bound on the channel fidelity --} Now that the goal is well established, we show that a channel certificate can be obtained by combining two certifications, one for the state serving as input of the channel and one for the output state, that is 
\begin{eqnarray}
F^i&=&F((\widetilde{\Lambda}^\sA_i\!\otimes\!\Lambda^\sB)[\rho],\ketbra{\phi^+}{\phi^+})\label{eq:Fi}\\
F^o&=&F((\Lambda^\sA_o\!\otimes\!\Lambda^\sB)[(\cE\!\otimes\!\Id)[\rho]],(\overline\cE\!\otimes\!\Id)[\ketbra{\phi^+}{\phi^+}]).
\label{eq:Ff}
\end{eqnarray}
$F^i$ corresponds to the fidelity of the input state $\rho$ with respect to the maximally entangled state $\ket{\phi^+}$. $F^o$ is the fidelity of the output state with respect to the image of $\ket{\phi^+}$ under the reference channel. As before, the role of the maps $\widetilde{\Lambda}^\sA_i$, $\Lambda^\sA_o$, and $\Lambda^\sB$ is to identify subspaces where the system states and the reference states can be compared, and the underlying isometries are enforced to have a product structure with respect to the partition of $\sA$ into separate parties.

In Appendix~B, we use the triangle and processing inequalities for the fidelity as well as properties of the isometries in Eqs.~\eqref{eq:Fi}-\eqref{eq:Ff} to show that the device independent Choi fidelity given in Eq.~\eqref{eq:FE} can be bounded by
\begin{equation}\label{eq:triangle}
\cF(\cE,\overline\cE) \geq \cos\left(\arccos\left(F^i\right)+\arccos\left(F^o\right)\right).
\end{equation}
Importantly, the bound holds for channels acting on several parties, in which case the states in Eq.~\eqref{eq:Fi}-\eqref{eq:Ff} are multi-partite and the maps $\widetilde{\Lambda}^\sA_i$ and $\Lambda^\sA_o$ are products of local maps for each party.

Formula~\eqref{eq:triangle} provides a first rate result: It shows how two channels can be compared even though they operate on  Hilbert spaces with (possibly unknown) different dimensions. This relation is made possible by the fact that the map $\Lambda^\sB$ is identical in both equations Eq.~\eqref{eq:Fi} and \eqref{eq:Ff}. One way to guarantee that the map is the same is to obtain certificates for both states with the same measurement boxes on side $\sB$. If this is fulfilled, a robust bound on the channel fidelity is obtained as soon as the input and output states are certified robustly. Interestingly, there are several known results and methods for state certification that are robust to noise~\cite{Bardyn09,Miller13,Yang14,Kaniewski16,Bamps16,Supic17,Natarajan17}.
 
We now show how Ineq.~\eqref{eq:triangle} can be used for the robust certification of (i) a one-qubit unitary, (ii) a two-qubit quantum logical gates.\\

\noindent\textit{Device-independent certification of a single-qubit unitary channel} Ensuring that quantum information can be preserved for a certain time, transmitted to a remote location or faithfully mapped between different physical systems are fundamental requirements for computing. This encapsulates quantum memories as hard drives, RAM units or parts of a quantum processor, quantum transmission lines between different units of a computer, and quantum converters between different information carriers. All these elements are mappings between input and output qubits, either separated in time or space or carried by different physical systems, that are ideally modeled by the identity channel.

Applying the formalism presented earlier to $\overline{\cE}=\Id$ in dimension two, involves ideally a maximally entangled two-qubit state as input state (see Fig.~\ref{fig:bigfigure}$a$). As the reference channel does not alter the input, we assess the fidelities of both input and output with the Clauser, Horne, Shimony, and Holt (CHSH) test~\cite{CHSH69}. The condition that $\Lambda^{\sB}$ is identical in both situation is then naturally satisfied. Given the CHSH values $\beta^{i/o}$, it is possible to bound the state fidelity as \cite{Kaniewski16}
\begin{align}\label{eq:FCHSH}
    F^{i/o} \geq F_\t{CHSH}=\sqrt{\frac12 \left(1+\frac{\beta^{i/o}-\beta^*}{2\sqrt{2}-\beta^*}\right)} \, ,
\end{align}
where $\beta^*=\frac{2(8 + 7\sqrt{2})}{17} \approx 2.11$. Inserting these fidelities into Eq.~\eqref{eq:triangle}, yields a robust device-independent certification of one-qubit unitaries depicted in Fig.~\ref{fig:bigfigure}$b$. Examples confirming the robustness can be found in Appendix~C.

Remarkably, testing the input state is not necessary for the certification of a unitary channel. Indeed one can see the channel itself as part of the local isometry. Hence, it is always possible to define $\widetilde{\Lambda}^\sA_i$ such that the fidelity of the input state is at least as large as the output fidelity, i.e. $F^i \geq F^o$. This relation together with Eq.~\eqref{eq:triangle} give a bound on the channel fidelity $\cF \geq 2(F^o)^2-1$ in terms of the output fidelity alone.\\

\begin{figure*}
\includegraphics[width=0.8\textwidth]{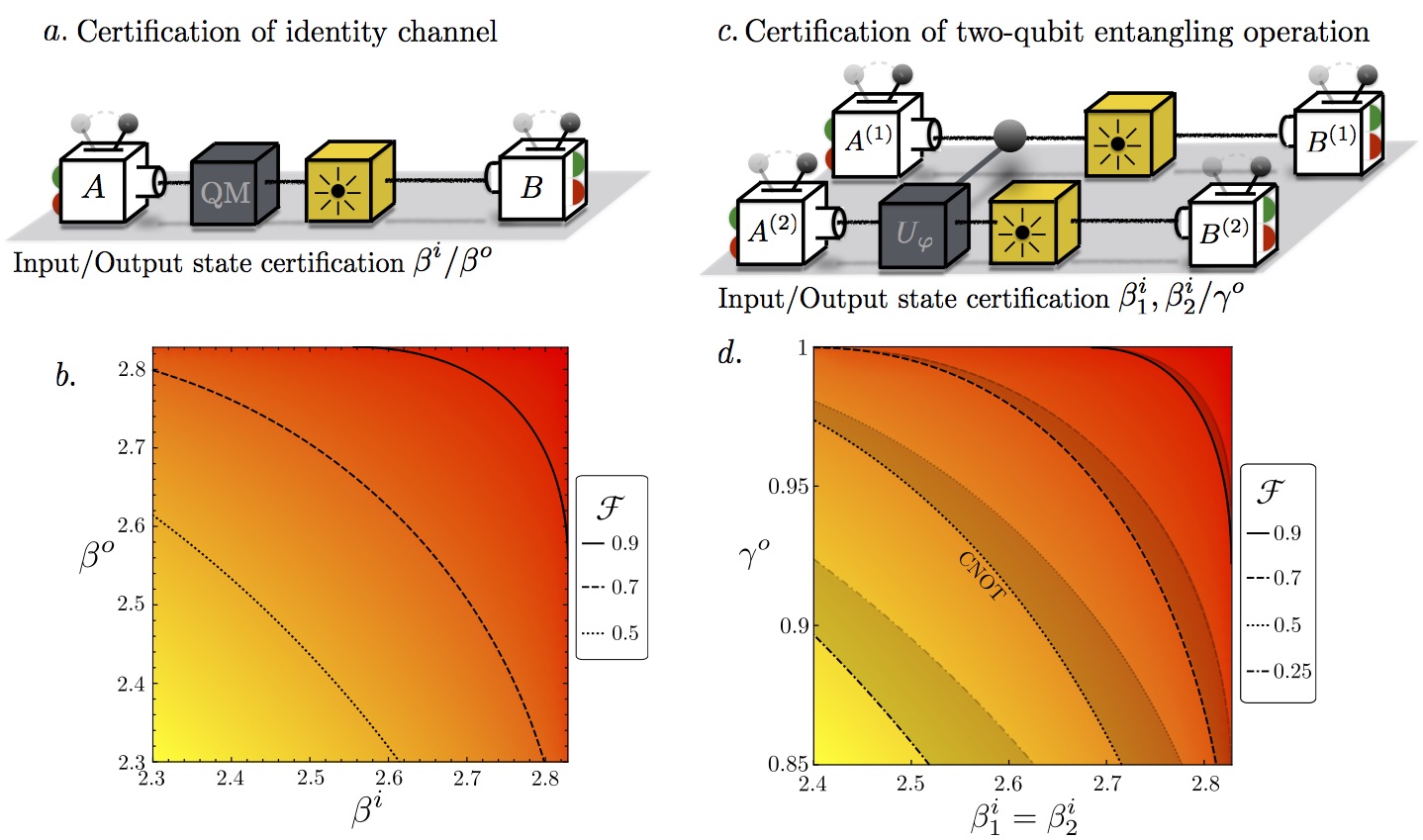}
\caption{Certification of one-qubit identity channel ($a. $ and $b.$) and two-qubit entangling operation ($c. $ and $d.$). $a.$ The certification of the identity in dimension 2 uses a source (yellow box) producing ideally a maximally two-qubit entangled state. The measurement devices (white boxes) $A$ and $B$ are used to perform two CHSH tests, with and without the tested device (black box). The sole knowledge of two CHSH values $\beta^i$ and $\beta^o$ gives a bound on the fidelity $\cF$ of the tested device with respect to the identity. $b.$ Robustness of the qubit identity certification as a function of the two CHSH values (color is a guide for the eye). $c.$ The certification of a two-qubit entangling operation uses a source (here represented with two yellow boxes) ideally producing two maximally entangled two-qubit states. Four measurement devices are used to perform Bell tests with and without the gate to be certified. The two Bell values $\beta^i_1$ and $\beta^i_2$ obtained to certify the two states produced by the source and the one obtained at the output of the gate (black box) $\gamma^o$ are used to bound the fidelity of any two-qubit controlled-unitary gates. $d.$ Robustness of the certification of two-qubit controlled-unitary operations (color is a guide for the eye). The best robustness is obtained for a class of gates including the Controlled-Not gate (CNOT). The grey lines shows the worst case. The greenish area thus includes all two-qubit controlled-unitary gates.}
\label{fig:bigfigure}
\end{figure*}

\noindent \textit{Device-independent certification of two-qubit entangling channels--}
Entangling gates are of central importance for quantum computing. On the one hand they are necessary for any non-trivial manipulation. On the other hand they are sufficient to enable universal quantum computation~\cite{Nielsen00}. We now present a setup that allows for the certification of an arbitrary two-qubit controlled-unitary gate. Such a gate can be put in the form
\be
CU_\varphi = \prjct{0}\otimes \Id + \prjct{1}\otimes e^{-\ii\, \varphi X}.
\ee
$CU_{\pi}$ is the controlled-NOT gate while $CU_{0}$ is the two-qubit identity channel. 

In order to bound the fidelity of an actual gate with the bipartite $CU_\varphi$ gate, we need to split side $\sA$ into two parties $A^{(1)}$ and $A^{(2)}$. Similarly, we also split side $\sB$ into $B^{(1)}$ and $B^{(2)}$ so that sharing a maximally entangled state of dimension 4 between $\sA$ and $\sB$ amounts to share two-qubit maximally entangled states $\ket{\phi^+_2}$ between $A^{(1)}$ and $B^{(1)}$ and between $A^{(2)}$ and $B^{(2)}$, c.f. Fig. \ref{fig:bigfigure}$c$. As we show now, four-partite statistics obtained after parties $A^{(1)}$ and $A^{(2)}$ jointly decide to use the device which supposedly performs the $CU_\varphi$ gate on their systems or not can lead to the certification of this gate. 

The certification of input states is identical to the one presented in the previous section. Performing two CHSH tests in parallel allows one to bound the fidelity of both states $F^{i}_{1(2)}\geq F_\t{CHSH}$ using Eq.~\eqref{eq:FCHSH} with the respective CHSH values $\beta^i_{1(2)}$. A lower bound on the fidelity of the global four party state is given by the product of the two singlet fidelities. Note that the maximal value for the CHSH test can be obtained with any pair of maximally non-commuting measurements on side $\sB.$ Hence, any certification of the output state with locally complementary observables for $B^{(1)}$ and $B^{(2)}$ automatically satisfies the constraint on the local map $\Lambda^\sB$.

The robust certification of the output state $\ket{\xi_\varphi} =(CU_\varphi\otimes\Id_{\sB}) \ket{\phi^+_2}^{\otimes 2}$ is less straightforward. To this end we devise a new family of Bell tests tailored to the robust certification of all states $\ket{\xi_\varphi}$, c.f. Appendix~D.4. Importantly, the Bell tests we derive have only two inputs and two outputs per party, and give the maximum quantum value for complementary measurement settings.

The fact that the Bell tests have  two inputs and two outputs per party allows us to make use of Jordan's Lemma in order to quantify their self-testing property. The latter ensures that the operator corresponding to the Bell test can be written as a direct sum of four qubit operators. Hence, we look for bounds on the fidelity assuming that the Bell operator is a four qubit operator, that is, qubit measurements are performed locally. If the extraction isometries only depend on local measurement settings and the square of the obtained fidelity bounds are convex functions of the mean value of the Bell operator, they automatically hold independently of the dimension~\cite{Scarani12}, see Appendix~D.2.
We find such bounds by using the isometries proposed in~\cite{Kaniewski16} which are known to provide very robust results for the singlet state. To do so we look for the state and measurement settings that minimize the fidelity of the extracted four qubit state with respect to $\ket{\xi_\varphi}$ while keeping a fixed expectation value $\gamma^o$ of the Bell operator, c.f. Appendix~D.3. The resulting bound on the fidelity is given by
\begin{align}
    F^o \geq \sqrt{\frac12 \left(1+\frac{\gamma^o-\gamma^*}{1-\gamma^*}\right)},
\end{align}
where $\gamma^*$ is a constant depending on the gate to be tested. This constant is upper bounded by $0.85$ for all $\varphi$, c.f. Appendix~D.5. 
Note that our approach to find Bell inequalities and deduce the corresponding robust fidelity bounds is applicable to other N-qubit states.

Given the bounds on the fidelities $F_1^i, F_2^i$ of the two initial states and on the fidelity $F^o$ of the output state, and checking that they have been obtained with common measurements for parties $B^{(1)}$ and $B^{(2)},$ we get from Ineq.~\eqref{eq:triangle} the following bound on the fidelity between the actual gate $\cE$ and the reference gate $\overline\cE = CU_\varphi$ 
$$
\cF(\cE,\overline\cE)\geq \cos(\arccos(F_1^{i} F_2^{i}) + \arccos(F^o)).
$$  
The result is shown in Fig.~\ref{fig:bigfigure}$d$ as a function of the observed Bell values assuming $\beta^i_1=\beta^i_2.$ Examples illustrating the robustness can be found in Appendix~C.

In analogy with the one-qubit identity certification, it is possible to prove that the actual two-qubit gate acts as a global unitary on side $\sA$ from $F^o$ only using $\cF \geq 2(F^o)^2-1$. This information alone is however not sufficient to identify the gates $CU_{\varphi}$ up to local isometries without additional assumptions, because the final state $\ket{\xi_\varphi}$ could be directly prepared by the source and merely transmitted by the device to be certified. \\

\noindent\textit{Discussions--}  We have introduced a framework for the device-independent certification of quantum channels. We applied our methods to two families of channels, namely single qubit identity channels and two qubit controlled unitary operations. They are of key importance for quantum computing and quantum networks and the robustness of our recipes insures that they can readily be used in present-day experiments.\\

\begin{acknowledgments}
This work was supported by the Swiss National Science Foundation (SNSF), through the NCCR QSIT, Grant PP00P2-150579, P300P2-167749 and 200021-175527. We also acknowledge the Army Research Laboratory Center for Distributed Quantum Information via the project SciNet.\\
\end{acknowledgments}

\noindent\textbf{Appendix A} \textit{Device-independent certification of quantum channels--}
In this appendix we give a thorough description of the device-independent channel certification introduced in the main text. There are three sections. Section A.1 provides a detailed definition of the device-independent certification of quantum channels and comments on Eq. (1) of the main text. Section A.2 addresses the extension to multi-party scenarios.  Section A.3 shows how the Choi fidelity bounds the diamond norm between the two channels.\\

\textbf{Appendix A.1} \textit{Formal definition of device-independent channel certification--}
In full generality, a quantum channel $\cE$ is a completely positive trace preserving map between linear operators on two Hilbert spaces \be
\cE: L(\cH)\to L(\cH').
\ee
It maps a quantum state $\rho$ to
\begin{equation}
\cE[\rho] = \sum_{i=1}^{I} K_i \rho K_i^\dag,
\end{equation}
where the Kraus operators $\{K_i\}_{i=1\ldots I}$ are represented by $\t{dim}(\cH')\times \t{dim}(\cH)$  complex matrices satisfying the relation $\sum_i K_i^\dag K_i = \Id$. In the following we will assume that the input and output Hilbert spaces are the same $\cH= \cH'=\cH_\sA.$ However, all the results can be straightforwardly generalized to the case where they do not match. In a bi-partite Bell-type scenario with measurement observables $M_{a|x}$ and $M_{b|x}$ corresponding to input $x$ on side $\sA$ and $y$ on side $\sB$ and outcome $a$ and $b$ respectively, we say that a behavior $P$ certifies device-independently a reference channel $\overline \cE : L(\cH_i)\to L(\cH_o)$ if, for every quantum realization $(\rho,\{M_{a|x},M_{b|y}\},\cE)$ compatible with $P$, there exist two local isometries 
\begin{align}
    &\Phi_i : \cH_\sA\otimes \cH_i \to \cH_\sA \otimes \cH_i^\t{ext},\\ 
    &\Phi_o : \cH_\sA  \to \cH_o \otimes \cH_o^\t{ext}
\end{align}
such that
\begin{align}
\nonumber
\t{Tr}_\t{ext}\Big((\Phi_o\! \circ \cE \circ\! \Phi_i \otimes \Id)\left[\rho_\sA\otimes \ketbra{\phi^+}{\phi^+}\right]\Big) \nonumber \\= (\overline \cE \otimes \Id)\left[\ketbra{\phi^+}{\phi^+}\right],
\end{align}
where the trace of over all the external subsystems, i.e. on $\cH_i^\t{ext}\otimes \cH_o^\t{ext}$. Here, $\ket{\phi^+}$ is the maximally entangled state in $\cH_i\otimes \cH_i.$ The maps are applied on the first Hilbert space and the identities on the second one. $\rho_\sA= \t{Tr}_\sB \, \rho$ is the unknown state prepared by the source on side $\sA$. The role of the isometries is to identify a subspace for the channel input and a subsystem for the channel output between which the channel $\cE$ acts as desired. To put it differently, the isometries define a recipe
\begin{align}
\cE_{i\text{-}o}: &L(\cH_i)\to L(\cH_o)\nonumber\\&\varrho \mapsto \t{Tr}_\t{ext}\Big(\big(\Phi_o\! \circ \cE \circ\! \Phi_i\big)[\rho_\sA\otimes \varrho]\Big)
\end{align}
of how the channel $\cE$ can be used in order to perform the desired operation, i.e. $\cE_{i\text{-}o} = \overline\cE$, as depicted in Fig.~\ref{fig:singleinput}.

When the equality presented before does not hold exactly, we would naturally define the distance between the actual channel and the reference one through 
\begin{equation}
\label{correct_def}
\cF(\cE,\overline\cE) = \max_{\Phi_o,\Phi_i} F\Big( \big(\cE_{i\text{-}o}\otimes \Id\big) \left[\ketbra{\phi^+}{\phi^+}\right], \overline{\rho} \Big)
\end{equation}
where $F(\rho,\sigma)=\tr{\sqrt{\sigma^{1/2}\rho\,\sigma^{1/2}}}$ is the Uhlmann fidelity and $\overline \rho = (\overline \cE  \otimes \Id) [\prjct{\phi^+}]$
is the target state. Note that the fidelity is symmetric. 

\begin{figure}[t!]
\centering
\includegraphics[width=\columnwidth]{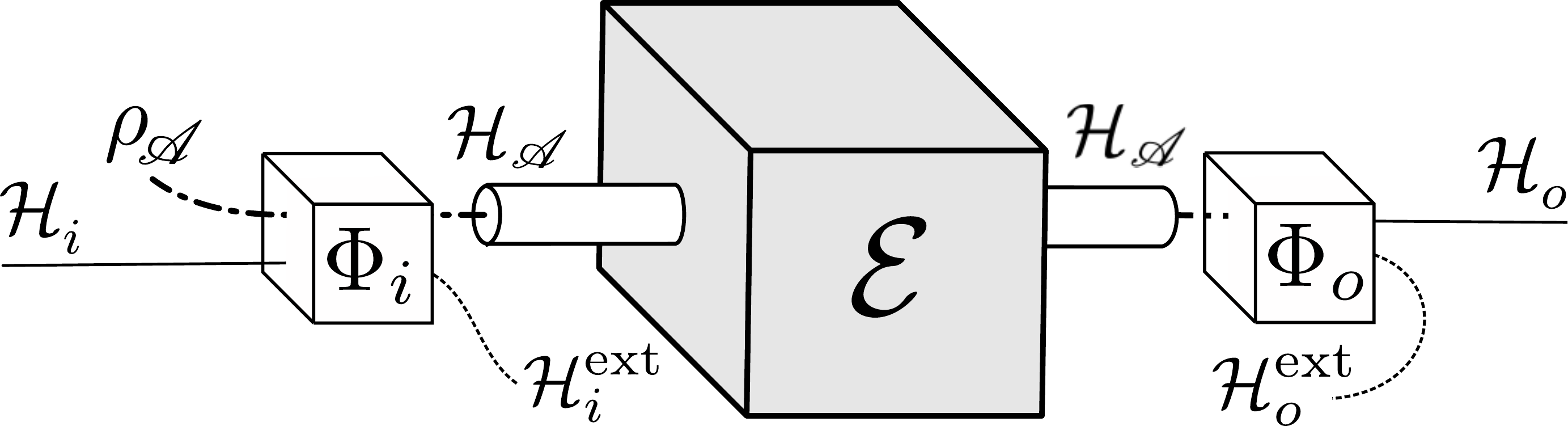}
\caption{The isometries $\Phi_i$ and $\Phi_o$ give a recipe of how the channel $\cE$ can be used in order to perform the desired operation $\overline \cE$ between $\cH_i$ and $\cH_o$.}
\label{fig:singleinput}
\end{figure}

In Proposition \ref{co:lemma1} below, we prove that the fidelity~\eqref{correct_def} can be related to a Uhlmann fidelity computed on the full Hilbert space, including the external systems $\rho_\t{ext}\in L(\cH_i^\text{ext}\otimes \cH_o^\text{ext})$:
\begin{align}
\label{pure_def}
&\cF(\cE,\overline\cE) \geq \nonumber \\
&\max_{\Phi_o,\Phi_i, \rho_{\text{ext}}} F\Big((\Phi_o \circ \cE \circ \Phi_i) \otimes \Id\left[\rho_\sA\otimes\ketbra{\phi^+}{\phi^+}\right], \overline\rho \otimes \rho_\text{ext}\Big),
\end{align}
with equality when the target channel $\overline\cE$ is unitary (i.e. the target state $\overline\rho$ is pure). This last expression is closer to the original definition of quantum state certification~\cite{Mayers04}. When the target channel $\overline\cE$ is not unitary, however, inequality \eqref{pure_def} is not tight, and the expressions in~\eqref{correct_def} better captures the relation between channels $\cE$ and $\overline\cE$. To see this, consider the example where the reference channel $\overline \cE$ is a totally depolarizing single-qubit channel. The channel to be certified $\cE$ can implement this depolarizing channel by entangling the system qubit with an external qubit. While $\cE$ and $\overline \cE$ operate identically on the system, $\cE$ outputs a pure global state. Moreover, the latter is entangled and the maximal fidelity cannot be obtained by optimizing it over a product state  $\overline \rho\otimes \rho_\t{ext}.$

To shorten the notation we  define the injection map 
\begin{align}
\Lambda_i: & L(\cH_\sA \otimes \cH_i)\to L(\cH_\sA)\\
& \varrho \mapsto \t{Tr}_{\cH_i^\t{ext}}\Big(\Phi_i[\varrho] \Big)
\end{align}
and extraction map
\begin{align}
\Lambda_o: & L(\cH_\sA )\to L(\cH_o)\\
& \varrho \mapsto \t{Tr}_{\cH_o^\t{ext}}\Big(\Phi_o[\varrho] \Big)
\end{align}
which allow to simply write $\cE_{i\text{-}o}[\bullet] = \Lambda_o\circ\cE\circ\Lambda_i[\rho_\sA\otimes\bullet]$.

\begin{lemma}\label{lemma:1}
Given three quantum states $\rho \in L(\cH_\t{sys}\otimes \cH_\t{ext})$, $\overline\rho= \ketbra{\psi}{\psi}\in L(\cH_\t{sys})$ and $\sigma\in  L(\cH_\t{ext})$ the following relation holds:
\begin{equation}
F\big(\rho,\overline\rho\otimes \sigma\big) = F\big(\rho_\t{sys},\overline\rho\big)\, F\big(\varrho_\t{ext},\sigma\big),
\end{equation}
where $\rho_\t{sys}=\t{Tr}_\t{ext}(\rho)$ and $\varrho_{\textnormal{ext}} = \frac{\t{Tr}_\t{sys}\left(\rho \,\,\overline\rho\otimes\Id\right)}{\t{Tr}\left(\rho \,\,\overline\rho\otimes\Id\right)}$.
\end{lemma}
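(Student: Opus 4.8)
The plan is to exploit the purity of $\overline\rho=\ketbra{\psi}{\psi}$: since a rank-one projector is its own square root, $(\overline\rho\otimes\sigma)^{1/2}=\overline\rho\otimes\sqrt\sigma$, so the Uhlmann fidelity in the convention of the paper becomes $F(\rho,\overline\rho\otimes\sigma)=\Tr{\sqrt{(\overline\rho\otimes\sqrt\sigma)\,\rho\,(\overline\rho\otimes\sqrt\sigma)}}$. First I would introduce the (generally unnormalised) partial overlap $X:=\bra{\psi}\rho\ket{\psi}\in L(\cH_\t{ext})$, characterised by $\Tr{X\,O}=\Tr{\rho\,(\ketbra{\psi}{\psi}\otimes O)}$ for every operator $O$ on $\cH_\t{ext}$; it is positive semidefinite because $\rho\geq0$. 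Writing $p:=\Tr{\rho\,(\overline\rho\otimes\Id)}=\Tr{X}$, the definition of $\varrho_\t{ext}$ in the statement is precisely $X=p\,\varrho_\t{ext}$ (when $p>0$; the case $p=0$ is trivial and treated at the end).

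Next I would simply substitute. Using $\ketbra{\psi}{\psi}\,\rho\,\ketbra{\psi}{\psi}=\ketbra{\psi}{\psi}\otimes\bra{\psi}\rho\ket{\psi}$ one gets $(\overline\rho\otimes\sqrt\sigma)\,\rho\,(\overline\rho\otimes\sqrt\sigma)=\ketbra{\psi}{\psi}\otimes(\sqrt\sigma\,X\,\sqrt\sigma)$. Since $\ketbra{\psi}{\psi}$ is a projector and $\sqrt\sigma\,X\,\sqrt\sigma\geq0$, the operator square root factorises as $\ketbra{\psi}{\psi}\otimes\sqrt{\sqrt\sigma\,X\,\sqrt\sigma}$ — checked by squaring — so taking the trace yields
\be
F(\rho,\overline\rho\otimes\sigma)=\Tr{\sqrt{\sqrt\sigma\,X\,\sqrt\sigma}}=\sqrt{p}\;\Tr{\sqrt{\sqrt\sigma\,\varrho_\t{ext}\,\sqrt\sigma}}=\sqrt{p}\;F(\varrho_\t{ext},\sigma).
\ee
It then remains to recognise $\sqrt{p}$ as $F(\rho_\t{sys},\overline\rho)$. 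This follows from the standard pure-state formula $F(\rho_\t{sys},\ketbra{\psi}{\psi})=\Tr{\sqrt{\ketbra{\psi}{\psi}\,\rho_\t{sys}\,\ketbra{\psi}{\psi}}}=\sqrt{\bra{\psi}\rho_\t{sys}\ket{\psi}}$ together with $\bra{\psi}\rho_\t{sys}\ket{\psi}=\Tr{\rho_\t{sys}\,\overline\rho}=\Tr{\rho\,(\overline\rho\otimes\Id)}=p$. Combining the two displays gives $F(\rho,\overline\rho\otimes\sigma)=F(\rho_\t{sys},\overline\rho)\,F(\varrho_\t{ext},\sigma)$, as claimed. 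For the degenerate case $p=0$: then $F(\rho_\t{sys},\overline\rho)=\sqrt{p}=0$, so the right-hand side vanishes for any (arbitrarily defined) $\varrho_\t{ext}$; and $X\geq0$ with $\Tr{X}=0$ forces $X=0$, so the computation above still gives $F(\rho,\overline\rho\otimes\sigma)=\Tr{\sqrt{\sqrt\sigma\,X\,\sqrt\sigma}}=0$, and the identity holds.

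I do not expect a genuine obstacle here; the only points requiring care are (i) keeping track of the asymmetric convention $F(\rho,\sigma)=\Tr{\sqrt{\sigma^{1/2}\rho\,\sigma^{1/2}}}$ so the factors of $\sqrt\sigma$ stay on the correct side, and (ii) justifying that $\sqrt{P\otimes Y}=P\otimes\sqrt{Y}$ for a rank-one projector $P$ and $Y\geq0$, which is immediate on squaring. The reduction $X=p\,\varrho_\t{ext}$ is purely definitional, so once the projector algebra is in place the lemma falls out.
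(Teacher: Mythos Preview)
Your proof is correct and follows essentially the same route as the paper's: both exploit $\prjct{\psi}^{1/2}=\prjct{\psi}$ to write $(\overline\rho\otimes\sigma)^{1/2}=\prjct{\psi}\otimes\sqrt\sigma$, sandwich $\rho$ to obtain $\prjct{\psi}\otimes\big(\sqrt\sigma\,\t{Tr}_\t{sys}(\rho\,\overline\rho\!\otimes\!\Id)\,\sqrt\sigma\big)$, take the square root and trace to factor out $\sqrt{p}\,F(\varrho_\t{ext},\sigma)$, and finally identify $\sqrt{p}=F(\rho_\t{sys},\overline\rho)$. Your version is somewhat tidier in that you name the partial overlap $X=\bra{\psi}\rho\ket{\psi}$ up front and you explicitly treat the degenerate case $p=0$, which the paper leaves implicit, but the argument is the same.
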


\begin{proof}
Let us first note that $\prjct{\psi}^\frac{1}{2}=\prjct{\psi}$ and expand the fidelity:
\begin{equation}
\begin{split}
F&\big(\rho, \overline \rho \otimes \sigma \big)\\
&=\tr{\big(\prjct{\psi}\otimes\sigma\big)^{\frac12}\rho\, \big(\prjct{\psi}\otimes\sigma\big)^{\frac12}}^{\frac12}\\
&=\t{Tr}\Big(\big(\Id\tot\sqrt{\sigma}\big) \big(\prjct{\psi}\otimes\!\Id\big) \rho\, \big(\prjct{\psi}\otimes\!\Id\big) \big(\Id\tot\sqrt{\sigma}\big)\Big)^{\frac12}\\
&=\t{Tr}\Bigg(\big(\Id\tot\sqrt{\sigma}\big) 
\big(\prjct{\psi}\otimes \t{Tr}_\t{sys}(\rho\, \prjct{\psi}\otimes\!\Id)\big) \big(\Id\tot\sqrt{\sigma}\big)\Bigg)^{\frac12}\\\
&=\t{Tr}_\t{sys}\big(\prjct{\psi}\big)^{\frac12}\, \t{Tr}_\t{ext}\Big(\sqrt{\sigma} 
\t{Tr}_\t{sys}(\rho\, \prjct{\psi}\otimes\!\Id) \sqrt{\sigma}\Big)^{\frac12}\\
&=\sqrt{\t{Tr}(\rho\, \prjct{\psi}\otimes\!\Id)}\, \t{Tr}_\t{ext}\Big(\sqrt{\sigma} \,
\underbrace{\frac{\t{Tr}_\t{sys}(\rho\, \prjct{\psi}\otimes\!\Id)}{\t{Tr}(\rho\, \prjct{\psi}\otimes\!\Id)}}_{\varrho_\t{ext}} \sqrt{\sigma}\Big)^{\frac12}\\
&=\sqrt{\t{Tr}_\t{sys}\Big
(\t{Tr}_\t{ext}(\rho)\, \prjct{\psi}\Big)}\, F(\varrho_\t{ext},\sigma).\\
&=\sqrt{\t{Tr}_\t{sys}\Big
(\rho_\t{sys}\, \prjct{\psi}\Big)}\, F(\varrho_\t{ext},\sigma).
\end{split}
\end{equation}
Finally, it is easy to see that the term with the square root equals to $F(\rho_\t{sys}, \overline \rho)$. To this end expand
\be\nonumber
\begin{split}
F&(\rho_\t{sys}, \overline \rho) = \t{Tr} \sqrt{\prjct{\psi} \rho_\t{sys} \prjct{\psi}}=\\ &\tr{\prjct{\psi}} \sqrt{\tr{\rho_\t{sys}\, \prjct{\psi}}}=
\sqrt{\tr{\rho_\t{sys}\, \prjct{\psi}}}
\end{split}
\ee
which completes the proof.
\end{proof}


\begin{proposition}\label{co:lemma1}
Given a target state $\overline \rho \in L(\cH_\t{sys})$ and any state $\Phi[\rho]\in L(\cH_\t{sys}\otimes \cH_\t{ext})$ with $\Lambda[\rho]=\t{Tr}_\t{ext}\left(\Phi[\rho]\right)\in L(\cH_\t{sys})$, the following relation holds
\be\nonumber
F(\Lambda[\rho],\overline \rho) \geq \max_{\rho_\t{ext}} F(\Phi[\rho],\overline \rho\otimes \rho_\t{ext}).
\ee
Moreover, when the target state $\overline \rho ={\overline \rho}^2$ is pure, the maximum is attained for the state $\rho_\t{ext}=\frac{\t{Tr}_\t{sys}(\Phi[\rho]\, \overline \rho \otimes\Id )}{\t{Tr}(\Phi[\rho]\, \overline \rho \otimes\Id )}$ and the inequality is saturated.
\end{proposition}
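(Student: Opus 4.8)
The plan is to separate the statement into its two parts: the inequality, which holds for an arbitrary target state and follows from a generic property of the fidelity, and the saturation, which uses the pure-state assumption and the preceding Lemma~\ref{lemma:1}.

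For the inequality $F(\Lambda[\rho],\overline\rho)\geq\max_{\rho_\t{ext}}F(\Phi[\rho],\overline\rho\otimes\rho_\t{ext})$, I would invoke monotonicity of the Uhlmann fidelity under completely positive trace-preserving maps, applied to the partial trace $\t{Tr}_\t{ext}$. Concretely, for any fixed normalized state $\rho_\t{ext}$,
\[
F(\Phi[\rho],\,\overline\rho\otimes\rho_\t{ext})\ \leq\ F\big(\t{Tr}_\t{ext}\Phi[\rho],\ \t{Tr}_\t{ext}(\overline\rho\otimes\rho_\t{ext})\big)\ =\ F(\Lambda[\rho],\overline\rho),
\]
where the last equality uses $\t{Tr}_\t{ext}\Phi[\rho]=\Lambda[\rho]$ and $\t{Tr}_\t{ext}(\overline\rho\otimes\rho_\t{ext})=\overline\rho$ because $\rho_\t{ext}$ is normalized. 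Taking the maximum over $\rho_\t{ext}$ on the left-hand side gives the claim, with no purity assumption needed.

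For the saturation when $\overline\rho=\ketbra{\psi}{\psi}$, I would apply Lemma~\ref{lemma:1} with the state $\Phi[\rho]$ in place of $\rho$, the system/ext split as in the proposition, and $\sigma=\rho_\t{ext}$. This yields the exact factorization
\[
F(\Phi[\rho],\,\overline\rho\otimes\rho_\t{ext})\ =\ F\big(\t{Tr}_\t{ext}\Phi[\rho],\,\overline\rho\big)\,F(\varrho_\t{ext},\rho_\t{ext})\ =\ F(\Lambda[\rho],\overline\rho)\,F(\varrho_\t{ext},\rho_\t{ext}),
\]
with $\varrho_\t{ext}=\frac{\t{Tr}_\t{sys}(\Phi[\rho]\,\overline\rho\otimes\Id)}{\t{Tr}(\Phi[\rho]\,\overline\rho\otimes\Id)}$ exactly the state named in the statement. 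Since $F(\varrho_\t{ext},\rho_\t{ext})\leq 1$ with equality precisely for $\rho_\t{ext}=\varrho_\t{ext}$, the maximum over $\rho_\t{ext}$ equals $F(\Lambda[\rho],\overline\rho)$ and is attained there, which proves both the saturation and the identification of the optimizer. The one point requiring a word of care is the degenerate case $\t{Tr}(\Phi[\rho]\,\overline\rho\otimes\Id)=0$, where $\varrho_\t{ext}$ is ill-defined; but in that case $F(\Lambda[\rho],\overline\rho)=\sqrt{\langle\psi|\Lambda[\rho]|\psi\rangle}=0$, and the already-established inequality forces $F(\Phi[\rho],\overline\rho\otimes\rho_\t{ext})=0$ for every $\rho_\t{ext}$, so the claimed identity holds trivially.

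I do not anticipate a serious obstacle: the substance is that the partial trace cannot decrease fidelity (handling the general inequality) together with the exact product-state factorization of Lemma~\ref{lemma:1} when the first factor is pure (handling the saturation). The only mild subtleties are checking that $\Phi[\rho]$ is a genuine normalized state so that Lemma~\ref{lemma:1} applies, and dispatching the null-overlap case as above.
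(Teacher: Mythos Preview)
Your proof is correct and follows essentially the same route as the paper: the processing inequality (monotonicity under partial trace) gives the general bound, and Lemma~\ref{lemma:1} supplies the factorization that identifies the optimizer and proves saturation in the pure case. Your explicit treatment of the degenerate situation $\t{Tr}(\Phi[\rho]\,\overline\rho\otimes\Id)=0$ is a small addition of rigor that the paper omits.
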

\begin{proof}
First note that the processing inequality implies $ F(\Phi[\rho],\overline \rho\otimes \rho_\t{ext})\le F(\Lambda[\rho],\overline \rho) $ -- tracing out subsystems can only increase the fidelity. This implies the inequality. Second, when $\overline\rho$ is pure, we can apply Lemma \ref{lemma:1} to the fidelity $F(\Phi[\rho], \overline \rho \otimes \rho_\t{ext})$, which gives 
\be
F(\Phi[\rho],\overline \rho\otimes \rho_\t{ext})= F(\Lambda[\rho],\overline \rho) F(\varrho_\t{ext}, \rho_\t{ext}),
\ee
with $\rho_\t{ext}=\frac{\t{Tr}_\t{sys}(\Phi[\rho]\, \overline \rho \otimes\Id )}{\t{Tr}(\Phi[\rho]\, \overline \rho \otimes\Id )}$. Hence, the equality $F(\Lambda[\rho],\overline \rho) = F(\Phi[\rho],\overline \rho\otimes \rho_\t{ext})$
is obtained for the choice $\rho_\t{ext}=\varrho_\t{ext}$.
\end{proof}


\textbf{Appendix A.2} \textit{Extension to multi-partite scenarios--}
In the framework of the certification of channel with several inputs, e.g. two-qubit gates, the channel $\cE$ supposedly implements an interaction between a number of physically distinct subsystems, which can be clearly identified at the input and the output of the tested device. In this case, the side $\sA$ is composed of $n$ parties\footnote{Note that more generally the number of input and output subsystems can differ, but all the formalism straightforwardly generalizes to this case.} $\{A^{(k)}\}_{k=1}^n$ carrying one subsystem each, but also the reference channel $\overline \cE$ comes with a product structure for the Hilbert spaces $\cH_{i/o}=\bigotimes_{k=1}^n \cH_{i/o}^{(k)}$.
\begin{figure}[t!]
\centering
\includegraphics[width=\columnwidth]{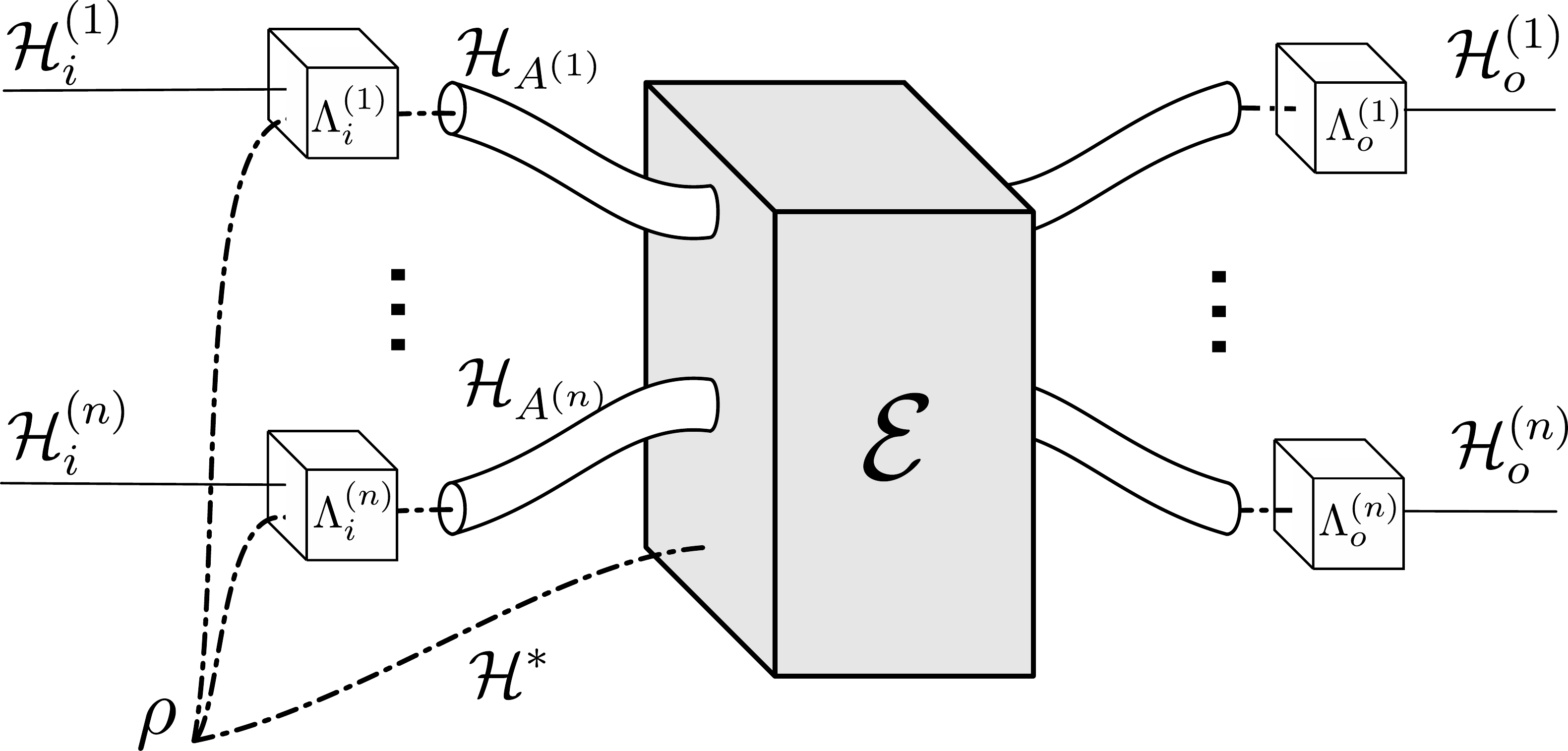}
\caption{Device independent certification of multipartite channels: decomposition of the input and output of the channel to be certified $\cE$ into Hilbert spaces belonging to the different parties $A^{(k)}$, and the role of injection/extraction maps. The induced map from $\cH_i$ to $\cH_o$ can be compared to the target channel $\overline\cE$.}
\label{fig:channel}
\end{figure}
In this context, a certification up to global isometries on side $\sA$ is not sufficient, e.g.  this would not distinguish an entangling two-qubit gate (a CNOT gate for example) with the identity. Hence, in order to allow for the device-independent certification of channels acting on several systems, the the maps $\Lambda_i$ and $\Lambda_o$ must be local with respect to the partition of $\sA$, as we now describe. 

A channel  $\cE :L(\cH_\sA) \to L(\cH_\sA)$ describing the tested gate operates on an Hilbert space $\cH_\sA$ of unknown dimension. Nevertheless, the gate acts on several subsystems that can be unambiguously separated. Hence, $\cH_\sA$ comes with a partition 
\be
\cH_\sA = (\bigotimes_{k=1}^{n} \cH_{A^{(k)}})\otimes \cH^*,
\ee
where each $\cH_{A^{(k)}}$ carries the state of the subsystem $k$ on which the device acts, and $\cH^*$ can carry external systems that define the initial state of the tested device. For example, the on/off button on the device can be such an external subsystem proper to the device. The Hilbert space for each physical subsystem $\cH_{A^{(k)}}$ has an unspecified dimension . We extend it with the input Hilbert space $\cH_i^{(k)}$ of the specified dimension, which will carry part of the maximally entangled state $\ket{\phi^+}\in\cH_i=\bigotimes_{k=1}^{n} \cH_i^{(k)}$. 

Then each local injection map sends state of this extended Hilbert space
\begin{align}
\Lambda_i^{(k)}: L\left(\cH_{A^{(k)}}\otimes\cH_i^{(k)}\right)\to L\left(\cH_{A^{(k)}}\right)
\end{align}
into the physically relevant Hilbert space $\cH_{A^{(k)}}$ on which the gate $\cE$ can act. With the overall map $\Lambda_i= (\bigotimes_{k=1}^n \Lambda_i^{(k)})\otimes \Id^*$, where $\Id^*$ simply says that the internal state of the device is untouched. Similarly, the local extraction given by
\be
\Lambda_o^{(k)}: L(\cH_{A^{(k)}}) \to L(\cH_o^{(k)})
\ee
maps the state of the physical output of the gate into the output Hilbert space of the ideal channel $\overline \cE$ such that the global map reads $\Lambda_o = \text{Tr}_{\cH^*} (\bigotimes_{k=1}^n \Lambda_o^{(k)})\otimes \Id^*$.

With these definitions, the channel composed with the maps takes the form
\be
\Lambda_o\circ \cE \circ \Lambda_i: L\left(\cH_\sA\otimes\cH_i \right)\to L(\cH_{o}).
\ee
By setting the state of the subsystem $\cH_\sA$ to be the one prepared by the source $\rho_\sA$, we finally get the desired recipe
\begin{align}
\cE_{i\text{-}o}: &L(\cH_i)\to L(\cH_o)\\
& \varrho \mapsto  \Lambda_o\circ \cE \circ \Lambda_i[\rho_\sA \otimes \varrho] \label{eq:recipe global}
\end{align}
that allows to compare $\cE$ to the target channel $\overline \cE$, as depicted in Fig.~\ref{fig:channel}.

Note that the state $\rho_\sA$ also contains the initial internal state of the device itself. This allows to address eccentric scenarios where the source can be entangled with the device. In practice, one can often assume that the state produced by the source is independent from the device, in which case the state $\rho_\sA$ decomposes as $\rho_\sA\otimes\rho^*$, where $\sA$ only refers to the source. Whenever this is possible, we can  forget about the internal state of the gate $\rho^*$ and the Hilbert space $\cH^*$, absorbing it in the definition of the channel $\cE$. We assume that this is the case for the rest of this section.

As a particular application of the above definition, let us now consider the case where the state in the experiment is produced by $n$ independent sources, each of which distributes an entangled state to $A^{(k)}$ and $B^{(k)}$. Under this assumption the marginal state of $\sA$ takes the form
\be
\rho_\sA = \bigotimes_{k=1}^n\rho_{A^{(k)}}.
\ee
Because each auxiliary state $\rho_{A^{(k)}}$ can be created locally, one can absorb them in newly defined injection maps
\begin{align}
    {\Lambda'}_i^{(k)}: &L(\cH_i^{(k)}) \to L(\cH_{A^{(k)}})\\
    & \varrho \mapsto  \Lambda_i^{(k)}[\rho_{A^{(k)}}\otimes\varrho].
\end{align}
With the maps defined this way, the explicit dependence on the state of the source $\rho_\sA$ disappears, that is
\be
\cE_{i\text{-}o} = \left(\bigotimes_{k=1}^n \Lambda_o^{(k)}\right) \circ \cE \circ\left(\bigotimes_{\ell=1}^n {\Lambda'}_i^{(\ell)}\right),\label{eq: recipe local}
\ee
as depicted in Fig.~\ref{fig:channelIndieSources}.

It is important to realize that if the sources are not independent such a formulation of the device-independent certification of a gate is impossible because one can easily imagine channels which can only perform the desired operation when the source provides an entangled state in some auxiliary degrees of freedom. For example, imagine that the source distributes two auxiliary maximally entangled states $\ket{\phi^+}_{A^{(k)}\text{-}A^{(\ell)}}$ to each pair of subsystems $A^{(k)}$ and $A^{(\ell)}$, and that the channel $\cE$ first teleports all the inputs on one subsystem, say $A^{(1)}$, performs the desired operation $\overline{\cE}$ locally at $A^{(1)}$, and then teleports the resulting states back to the respective parties. This sequence of operations would allow the channel to act as desired, but only provided that these singlets are actually distributed by the source. In other words, it would fail to work when provided independent auxiliary states. The recipe defined in Eq.~\eqref{eq: recipe local} would then fail to act as the target channel $\overline\cE$ and only the procedure defined in Eq.~\eqref{eq:recipe global}, which includes the partial state of the source $\rho_\sA$ produced in experiment, would perform the desired operation. \\

\begin{figure}[t!]
\centering
\includegraphics[width=\columnwidth]{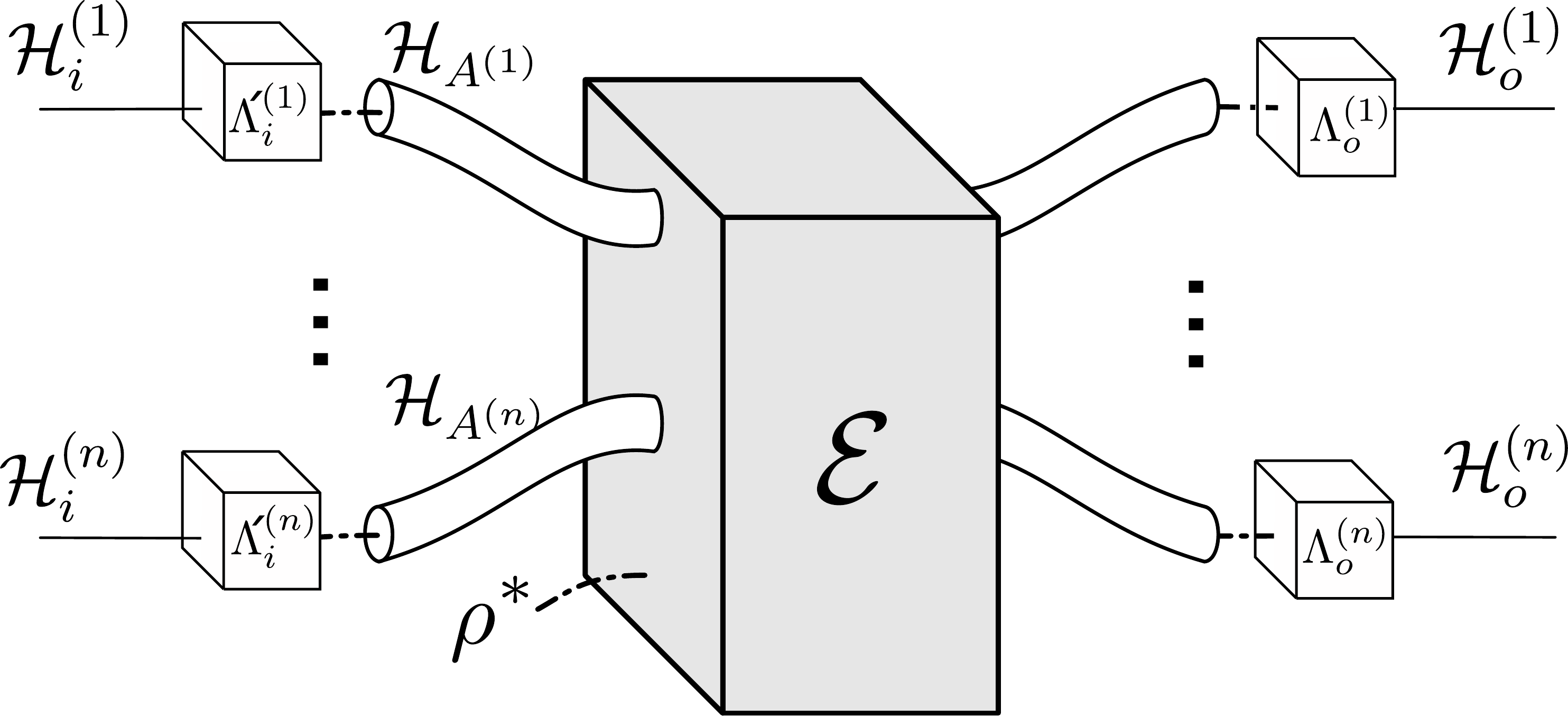}
\caption{Device independent certification of multipartite channels with $n$ independent sources. The local injection maps $\Lambda_i'^{(k)}$ can be optimized independently of the system's state $\rho$.}
\label{fig:channelIndieSources}
\end{figure}


\textbf{Appendix A.3} \textit{Relation between the Uhlmann channel fidelity and the diamond norm--}
We show how the Uhlmann fidelity between the Choi state of two channels can be used to bound the diamond norm between these two channels.

Let us first recall that the diamond norm between two channels $ \cE_{i\text{-}o}$ and $\overline \cE$, acting between the same Hilbert spaces $L(\cH_i) \to L(\cH_o)$, is defined as\footnote{Remark that one can sometimes see the definition with  the maximization running over states $\ket{\zeta}$ on an arbitrarily extended Hilbert space  $\cH_i\otimes \cH_\t{ext}$. But because of the Schmidt decomposition any such state only has support on a subspace of $\cH_\t{ext}$ of dimension $\t{dim}(\cH_i)$ at most. Hence it is sufficient to consider states $\ket{\zeta}\in \cH_i\otimes \cH_i$.}
\begin{align}
\nonumber
&||\cE_{i\text{-}o} - \overline \cE ||_\diamond =\\
&\sup_{\ket{\zeta}\in \cH_i\otimes \cH_i} || (\cE_{i\text{-}o} \otimes \Id)[\prjct{\zeta}]-(\overline\cE \otimes \Id)[\prjct{\zeta}]\,||_1.
\end{align}
This expression has an operational meaning: It directly relates to the maximal probability to discriminate the two channels in a single measurement, when comparing the images of some state $\ket{\zeta}$. 

We are interested to compare the reference channel $\overline \cE$ and the channel $\cE_{i\text{-}o}=\Lambda_o \circ \cE \circ \Phi_i$.
In particular, we show that the fidelity  $\cF(\cE,\overline \cE)$ of Eq.~(1) of the main text can be used to bound the diamond norm between $\cE_{i\text{-}o}$ and $\overline \cE$.

\begin{proposition}
The diamond norm $||\cE_{i\text{-}o} - \overline \cE ||_\diamond$ between two quantum channels 
\be
\cE_{i\text{-}o}\text{ and }\overline \cE : L(\cH_i)\to L(\cH_o)
\ee
is upper bounded by 
\be
||\cE_{i\text{-}o}- \overline \cE||_\diamond \leq  2\, \t{dim}(\cH_i)\sqrt{1-\cF^2(\cE_{i\text{-}o}, \overline \cE)},
\ee
where
\be\label{eq: F prop3}
\cF(\cE_{i\text{-}o}, \overline \cE) = F\big((\cE_{i\text{-}o}\otimes \Id)[\prjct{\phi^+}], (\overline \cE\otimes \Id)[\prjct{\phi^+}]\big)
\ee
is the Choi fidelity between the two channels defined with the maximally entangled state $\ket{\phi^+}\in \cH_i\otimes \cH_i$.
\end{proposition}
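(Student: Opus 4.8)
The plan is to relate the diamond norm, which is a trace-norm quantity, to the Choi fidelity by passing through two standard inequalities: first the Fuchs--van de Graaf inequality converting fidelity into trace distance for the Choi states, and then a bound controlling how much the trace distance can grow when we pass from the (normalized) Choi states to the images of an arbitrary test state $\ket{\zeta}$. Concretely, I would proceed as follows.

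\textbf{Step 1: From Choi fidelity to Choi trace distance.} Apply the Fuchs--van de Graaf inequality to the two normalized Choi states $J(\cE_{i\text{-}o}) = (\cE_{i\text{-}o}\otimes\Id)[\prjct{\phi^+}]$ and $J(\overline\cE) = (\overline\cE\otimes\Id)[\prjct{\phi^+}]$. Since $\ket{\phi^+}$ is normalized, both Choi states have unit trace, so $\frac12\norm{J(\cE_{i\text{-}o}) - J(\overline\cE)}_1 \le \sqrt{1 - \cF^2(\cE_{i\text{-}o},\overline\cE)}$, using the definition~\eqref{eq: F prop3}.

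\textbf{Step 2: From the Choi states to an arbitrary input.} For any $\ket{\zeta}\in\cH_i\otimes\cH_i$ appearing in the diamond-norm supremum, write $\ket{\zeta} = (T\otimes\Id)\ket{\phi^+}\sqrt{d}$ for a suitable operator $T$ on $\cH_i$ with $d = \t{dim}(\cH_i)$; normalization of $\ket{\zeta}$ forces $\Tr{T^\dag T} \le d$, hence $\norm{T}_\infty \le \norm{T}_2 \le \sqrt{d}$. Then $(\cE_{i\text{-}o}\otimes\Id)[\prjct{\zeta}] - (\overline\cE\otimes\Id)[\prjct{\zeta}] = d\,(\Id_o\otimes T)\big(J(\cE_{i\text{-}o}) - J(\overline\cE)\big)(\Id_o\otimes T^\dag)$ — the map $\cE_{i\text{-}o}$ acts only on the first factor while $T$ acts only on the second, so they commute through each other. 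Taking the trace norm and using submultiplicativity $\norm{A X B}_1 \le \norm{A}_\infty \norm{X}_1 \norm{B}_\infty$ gives $\norm{(\cE_{i\text{-}o}\otimes\Id)[\prjct{\zeta}] - (\overline\cE\otimes\Id)[\prjct{\zeta}]}_1 \le d \cdot \norm{T}_\infty^2 \cdot \norm{J(\cE_{i\text{-}o}) - J(\overline\cE)}_1 \le d^2 \norm{J(\cE_{i\text{-}o}) - J(\overline\cE)}_1$.

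\textbf{Step 3: Combine.} Taking the supremum over $\ket{\zeta}$ and chaining the two bounds yields $\norm{\cE_{i\text{-}o} - \overline\cE}_\diamond \le d^2 \norm{J(\cE_{i\text{-}o}) - J(\overline\cE)}_1 \le 2 d^2 \sqrt{1 - \cF^2(\cE_{i\text{-}o},\overline\cE)}$. I note this gives $d^2$ rather than the claimed $2d$ — so to reach the stated bound one must instead invoke the sharper known relation $\norm{\Psi}_\diamond \le d \,\norm{J(\Psi)}_1$ between the diamond norm and the trace norm of the (unnormalized) Choi operator of a Hermiticity-preserving map $\Psi = \cE_{i\text{-}o} - \overline\cE$ (see e.g. Watrous), which saves one factor of $d$; combined with Step 1 this gives exactly $\norm{\cE_{i\text{-}o} - \overline\cE}_\diamond \le 2 d \sqrt{1 - \cF^2}$.

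\textbf{Main obstacle.} The delicate point is Step 3: getting the dimensional prefactor right. The naive "purification plus operator bound" argument costs $d^2$, and one genuinely needs the optimal Choi-to-diamond conversion $\norm{\Psi}_\diamond \le d\norm{J(\Psi)}_1$ (whose proof itself uses that for any fixed $\ket{\zeta}$ one can bound the output trace norm by inserting the block-diagonal structure and using Cauchy--Schwarz on the Schmidt coefficients of $\ket{\zeta}$). I would either cite this inequality as a known fact about superoperators or, if a self-contained argument is wanted, prove it by writing $\ket{\zeta} = (T\otimes\Id)\ket{\phi^+}\sqrt d$ as above but then bounding $\norm{(\Id_o\otimes T)X(\Id_o\otimes T^\dag)}_1$ more carefully via $\norm{T}_2^2 \le d$ together with the fact that, after the tensor rearrangement, one of the two $T$-insertions can be absorbed against the test vector rather than estimated in operator norm. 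Everything else (Fuchs--van de Graaf, the identity relating $\prjct{\zeta}$ to the Choi state, submultiplicativity of the trace norm) is routine.
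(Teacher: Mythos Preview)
Your proposal is correct but takes a different route from the paper. The paper also starts with Fuchs--van de Graaf (your Step~1), but to pass from the Choi states to an arbitrary test state $\ket{\zeta}$ it argues operationally rather than algebraically: using the Schmidt decomposition of $\ket{\zeta}$, one builds a two-outcome measurement on the \emph{second} tensor factor (a unitary followed by a Kraus operator $K=\sum_i \zeta_i\prjct{i'}$) that, starting from $\ket{\phi^+}$, prepares $\ket{\zeta}$ with success probability exactly $p=1/\dim(\cH_i)$. Because this steering protocol acts only on the reference system, it commutes with both channels; the processing inequality for the trace distance then gives $D\big((\cE_{i\text{-}o}\tot\Id)[\prjct{\phi^+}],(\overline\cE\tot\Id)[\prjct{\phi^+}]\big)\geq p\,D\big((\cE_{i\text{-}o}\tot\Id)[\prjct{\zeta}],(\overline\cE\tot\Id)[\prjct{\zeta}]\big)$, which yields the factor $d$ directly and without citing any external diamond-norm/Choi bound. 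Your approach, by contrast, goes through the known superoperator inequality $\norm{\Psi}_\diamond\leq d\,\norm{J(\Psi)}_1$; this is quicker if you are willing to cite Watrous, but your sketch of a self-contained proof of that inequality (``absorb one $T$-insertion against the test vector'') is vague and would need to be made precise. The paper's probabilistic argument is fully elementary and self-contained, and has the bonus of an operational interpretation; your argument is shorter once the cited inequality is granted. (Minor slip: in your Step~2 you write $\ket{\zeta}=(T\otimes\Id)\ket{\phi^+}\sqrt{d}$ but then say $T$ acts on the second factor; you need $T$ on the reference side so that it commutes with the channel.)
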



\begin{proof}
We start by  showing that any state $\ket{\zeta}\in \cH_i\otimes \cH_i$ can be probabilistically prepared from the maximally entangled state $\ket{\phi^+}$ by solely acting on the second Hilbert space. Given the Schmidt decomposition of the state $\ket{\zeta}= \sum_i \zeta_i \ket{i}\ket{i'}$ there exists a local unitary $\Id\otimes U$ such that
\be
\ket{\overline \phi^+}=\Id\otimes U \ket{\phi^+}= \frac{1}{\sqrt{\t{dim}(\cH_i)}}\sum_i \ket{i}\ket{i'}.
\ee
Furthermore, the operator $K= \sum_i \zeta_i \prjct{i'}$ applied on this state results into
\be
\Id\otimes K \ket{\overline \phi^+}= \frac{1}{\sqrt{\t{dim}(\cH_i)}} \ket{\zeta}.
\ee
$K$ is a valid Kraus operator $0\leq K^\dag K \leq 1$ and hence together with the unitary $U$ can be implemented as one branch of some probabilistic announced  two-branch protocol $\mathcal{P}$ acting on the second Hilbert space (a two-outcome POVM). From the initial state  $\ket{\phi^+}$ this protocol prepares the state $\ket{\zeta}$ with probability $p=\frac{1}{\t{dim}(\cH_i)}$ (successful branch), and some other state $\varrho$ with probability $(1-p)$ (failure branch).

Next recall that the Uhlmann fidelity between any two states $\rho$ and $\sigma$ upper bounds the trace-distance between the same states as
\be
D(\rho,\sigma)=\frac{1}{2}||\rho-\sigma||_1 \leq \sqrt{1-F^2(\rho,\sigma)}.
\ee
Hence, from Eq.~\eqref{eq: F prop3} one gets
\be
D\Big(( \cE_{i\text{-}o}\otimes \Id)[\prjct{\phi^+}],(\overline \cE\otimes \Id)[\prjct{\phi^+}]\Big)\leq \sqrt{1-\cF^2}.
\ee
The trace distance $D$ satisfies the processing inequality, meaning that it cannot increase whatever common operation is performed on the two states. In particular, this holds for the protocol $\mathcal{P}$, which, in addition, commutes with the channels and can therefore be applied before them, leading to
\begin{align}
D\Big(( \cE_{i\text{-}o}\otimes \Id)[\prjct{\phi^+}],(\overline \cE\otimes \Id)[\prjct{\phi^+}]\Big)\geq\\ 
p D\Big(( \cE_{i\text{-}o}\otimes \Id)[\prjct{\zeta}],(\overline \cE\otimes \Id)[\prjct{\zeta}]\Big)
+(1-p) D'.
\end{align}
$D'$ stands for the trace distance between the states prepared in the failure branch and satisfies $(1-p) D'\geq 0$. 
Combining with the previous inequality yields
\be
D\Big((\cE_{i\text{-}o}\otimes \Id)[\prjct{\zeta}],(\overline \cE\otimes \Id)[\prjct{\zeta}]\Big)\leq \frac{1}{p} \sqrt{1-\cF^2}.
\ee
Since this holds for any state $\ket{\zeta}$ and $p=\frac{1}{\t{dim}(\cH_i)},$ the proposition is proven.
\end{proof}

\vspace{20 pt}

\noindent\textbf{Appendix B} \textit{Bounding the channel fidelity with state fidelities--} In this Appendix, we prove Eq.~(4) of the main text. 

\begin{proposition}\label{proposition:B1}
Let $\rho$ be a quantum state shared between the sides $\sA$ and $\sB$, $\cE$ and $\overline \cE$ two channels acting on side $\sA$ with $\overline \cE : L(\cH_i)\to L(\cH_o)$ a reference channel, and $\ket{\phi^+}\in \cH_i\otimes\cH_i$ a maximally entangled state. Given the local maps $\widetilde{\Lambda}_i^\sA [\cdot]=\textnormal{Tr}_{\widetilde{\textnormal{ext}}} (\widetilde{\Phi}_i^\sA[\cdot])$,  $\Lambda_o^\sA[\cdot]=\textnormal{Tr}_\textnormal{ext}(\Phi_o^\sA[\cdot])$ and $\Lambda^\sB[\cdot]=\textnormal{Tr}_\textnormal{ext}(\Phi^\sB[\cdot])$ acting on $\sA$ and $\sB$ respectively with the corresponding isometries $\widetilde{\Phi}_i^\sA$, $\Phi_o^\sA$ and $\Phi^\sB$, the following two fidelities
\begin{eqnarray}
F^i&=&F\left((\widetilde{\Lambda}^\sA_i\otimes \Lambda^\sB) [\rho],\ketbra{\phi^+}{\phi^+}\right)\nonumber\\
F^o&=&F\left((\Lambda^\sA_o\otimes\Lambda^\sB)\left[(\cE\otimes \Id_\sB)[\rho]\right],(\overline\cE\otimes\Id)\left[\prjct{\phi^+}\right]\right), \nonumber
\end{eqnarray}
lead to the following bound
\begin{equation}
\arccos(\cF(\cE,\overline\cE)) \leq \arccos(F^i)+\arccos(F^o).\nonumber
\end{equation}
\end{proposition}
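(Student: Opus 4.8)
The plan is to build a single chain of fidelity estimates by inserting the tested channel $\cE$ into the triangle inequality for the Bures angle $A(\rho,\sigma)=\arccos F(\rho,\sigma)$, using the fact that $\cE$ commutes with the operations carried out on side $\sB$. First I would fix the isometries $\widetilde\Phi_i^\sA$, $\Phi_o^\sA$, $\Phi^\sB$ that attain (or approach) $F^i$ and $F^o$, and split side $\sA$ so that $\widetilde\Phi_i^\sA$ acts \emph{before} $\cE$ is inserted. The key observation is that the hypothesis on $F^i$ tells us the post-injection state $(\widetilde\Lambda_i^\sA\otimes\Lambda^\sB)[\rho]$ is $\arccos(F^i)$-close, in Bures angle, to $\prjct{\phi^+}$. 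Applying the \emph{same} physical channel $\cE$ (on the $\cH_i$-system produced by the injection) to both states, and then applying $\Lambda_o^\sA\otimes\Id$, and using the processing inequality for fidelity (channels do not decrease $F$, hence do not increase the Bures angle), I get
\begin{equation}\nonumber
A\!\left((\Lambda_o^\sA\!\otimes\!\Lambda^\sB)[(\cE\!\otimes\!\Id)[\rho]],\ (\Lambda_o^\sA\!\otimes\!\Id)[(\cE\!\otimes\!\Id)(\widetilde\Lambda_i^\sA\!\otimes\!\Lambda^\sB)^{-1}\!\cdots]\right)\le \arccos(F^i),
\end{equation}
i.e. the state obtained by the "input-certified" route is close to $(\cE_{i\text{-}o}\otimes\Id)[\prjct{\phi^+}]$, with $\cE_{i\text{-}o}=\Lambda_o^\sA\circ\cE\circ\widetilde\Lambda_i^\sA[\rho_\sA\otimes\cdot]$ exactly the recipe from Appendix A.1--A.2.

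Next I would use the definition of $F^o$: the output state $(\Lambda_o^\sA\otimes\Lambda^\sB)[(\cE\otimes\Id)[\rho]]$ is within Bures angle $\arccos(F^o)$ of the target Choi state $\overline\rho=(\overline\cE\otimes\Id)[\prjct{\phi^+}]$. Combining the two estimates by the triangle inequality for the Bures angle on the Hilbert space carrying the output and side $\sB$,
\begin{equation}\nonumber
A\!\left((\cE_{i\text{-}o}\otimes\Id)[\prjct{\phi^+}],\ \overline\rho\right)\ \le\ \arccos(F^i)+\arccos(F^o).
\end{equation}
Here the crucial point — the one that makes the "same $\Lambda^\sB$" hypothesis indispensable — is that both $F^i$ and $F^o$ are measured after applying the identical map $\Lambda^\sB$ on side $\sB$, so the two states being compared by the triangle inequality live on literally the same space and the middle state is genuinely a common vertex; without this, the angles could not be concatenated. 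Finally, since $\cF(\cE,\overline\cE)$ is the \emph{maximum} of $F((\cE_{i\text{-}o}\otimes\Id)[\prjct{\phi^+}],\overline\rho)$ over all admissible $\Phi_i,\Phi_o$ (Eq.~\eqref{correct_def}, and its multipartite/product-structured version from Appendix~A.2), the particular choice above gives $\arccos(\cF(\cE,\overline\cE))\le A((\cE_{i\text{-}o}\otimes\Id)[\prjct{\phi^+}],\overline\rho)\le \arccos(F^i)+\arccos(F^o)$, which is the claim. Taking cosines (and noting the bound is only informative when the right side is in $[0,\pi]$) recovers Eq.~\eqref{eq:triangle}.

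I expect the main obstacle to be bookkeeping rather than a deep step: one must verify carefully that the injection isometry $\widetilde\Phi_i^\sA$ can indeed be commuted past $\cE$ — that is, that "certifying the input state" and "feeding $\cE$" refer to the same $\sA$-system in a way compatible with the tensor/product structure of Appendix~A.2 — and that the map $\Lambda_o^\sA$ applied in the input-certified route is exactly the one appearing in $F^o$. A secondary subtlety is that $\widetilde\Lambda_i^\sA$ may trace out degrees of freedom, so strictly one should argue at the level of the isometry $\widetilde\Phi_i^\sA$ (keeping the extension), apply $\cE$ and $\Phi_o^\sA$, and only then trace out $\widetilde{\text{ext}}$ and $\text{ext}$, invoking the processing inequality once at the end; this is where Proposition~\ref{co:lemma1}/Lemma~\ref{lemma:1} style reasoning about discarding external systems is used to make sure the discarded systems do not cost anything in the final fidelity comparison. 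The multipartite case adds no new idea: the product structure of $\widetilde\Lambda_i^\sA=\bigotimes_k\widetilde\Lambda_i^{\sA,(k)}$ and $\Lambda_o^\sA=\bigotimes_k\Lambda_o^{\sA,(k)}$ is preserved throughout, so the same chain of inequalities applies verbatim with multipartite states.
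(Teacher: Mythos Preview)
Your overall plan—use the common $\Lambda^\sB$ to land both estimates in one space, then combine processing inequality with the triangle inequality for the Bures angle—is exactly the skeleton of the paper's proof. But there is a genuine gap, not mere bookkeeping: the map $\widetilde\Lambda_i^\sA$ in $F^i$ is an \emph{extraction} map $L(\cH_\sA)\to L(\cH_i)$ (it sends the physical state to the reference space), whereas the injection map $\Lambda_i$ in the definition of $\cF(\cE,\overline\cE)$ goes the other way, $L(\cH_\sA\otimes\cH_i)\to L(\cH_\sA)$. Your composition $\cE\circ\widetilde\Lambda_i^\sA$ does not type-check ($\cE$ acts on $\cH_\sA$, not on $\cH_i$), and the symbol $(\widetilde\Lambda_i^\sA\otimes\Lambda^\sB)^{-1}$ in your displayed inequality has no meaning since these maps involve partial traces and are not invertible. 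So you cannot ``apply $\cE$ to both sides'' of the $F^i$ estimate as written; there is no $\cH_i$-system on which $\cE$ acts.

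The paper supplies precisely this missing construction. One writes the extraction isometry as a unitary on an enlarged space, $\widetilde\Phi_i^\sA[\cdot]=\widetilde U_i[\,\cdot\otimes\prjct{\widetilde{\text{ext}}}\,]$, uses Proposition~\ref{co:lemma1} to rewrite $F^i$ at the isometry level, then moves $\widetilde U_i^\dag$ onto the \emph{target} side and traces out the ancilla. This manufactures a genuine injection map $\Lambda_i^\sA$ with
\[
F\big(\rho',\,(\Lambda_i^\sA\otimes\Id)[\prjct{\phi^+}\otimes\rho_\sA]\big)\ \geq\ F^i,\qquad \rho'=(\Id_\sA\otimes\Lambda^\sB)[\rho].
\]
Now both arguments live in $L(\cH_\sA\otimes\cH_i)$, so $\Lambda_o^\sA\circ\cE\otimes\Id$ can legitimately be applied to both, and from there your triangle-inequality step goes through verbatim with $\varrho_1=(\Lambda_o^\sA\circ\cE\circ\Lambda_i^\sA\otimes\Id)[\prjct{\phi^+}\otimes\rho_\sA]$, $\varrho_2=(\Lambda_o^\sA\circ\cE\otimes\Id)[\rho']$, $\varrho_3=(\overline\cE\otimes\Id)[\prjct{\phi^+}]$. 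This inversion-of-the-isometry step is the one nontrivial idea in the proof; what you flagged as ``commuting $\widetilde\Phi_i^\sA$ past $\cE$'' is really this direction reversal, and it is not automatic.
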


\begin{proof}

First note that the same map $\Lambda_\sB$ appears in both equations. Hence we get rid of all the external systems on the side $\sB$ and introduce the state $\rho' = (\Id_\sA\otimes \Lambda_\sB)[\rho] \in L(\cH_\sA \otimes \cH_i)$ where the dimension of $\cH_\sA$ is unspecified but the output subsystems on side $\sB$ is already identified. Expressing the two fidelities $F^i$ and $F^o$ with $\rho'$ gives
\begin{eqnarray}\label{eq: Fi app B}
F^i&=&F\left((\widetilde{\Lambda}_i^\sA\otimes\Id) [\rho'],\ketbra{\phi^+}{\phi^+}\right)\\
F^o&=&F\left((\Lambda_o^\sA\circ\cE\otimes\Id)[\rho'],(\overline\cE\otimes\Id)\left[\prjct{\phi^+}\right]\right). 
\label{eq:prp34}
\end{eqnarray}
Using Proposition \ref{co:lemma1} for $\cH_\t{sys} =\cH_i$, we can express the first fidelity as
\be\label{eq:prp3.1}
F^i= F\Big((\widetilde{\Phi}_i^\sA\otimes \Id)[\rho'],\prjct{\phi^+}\otimes \rho_\t{ext}^\sA \Big),
\ee
where the auxilliary state on Alice side $\rho_\t{ext}^\sA$ is by definition given by 
\be
\rho_\t{ext}^\sA= \text{Tr}_{(\cH_i^\sA\otimes \cH_i^\sB)}( \widetilde{\Phi}_i\otimes\Id[\rho'])= \text{Tr}_{\cH_i^\sA} \widetilde{\Phi}_i[ \rho_\sA]
\ee
with $\rho_\sA = \text{Tr}_\sB \,\rho$ as defined in the main text.

An isometry $\widetilde{\Phi}_i^\sA$ is a unitary embedding of the state into an Hilbert space of a larger dimension. As such, it can be decomposed as
$(\widetilde{\Phi}_i^\sA\otimes \Id) [\rho']= (\widetilde{U}_i\otimes\Id)[\rho'\otimes\prjct{\widetilde{\t{ext}}}^\sA_i]$. Plugging this into Eq.~\eqref{eq:prp3.1} gives
\begin{align}
F^i&=F\Big( (\widetilde{U}_i\otimes\Id)[\rho'\otimes\prjct{\widetilde{\t{ext}}}^\sA_i],\prjct{\phi^+}\otimes \rho_\t{ext}^\sA \Big) =\nonumber\\
&F\Big( \rho'\otimes\prjct{\widetilde{\t{ext}}}^\sA_i,(\widetilde{U}_i^{\dag}\otimes\Id)[\prjct{\phi^+}\otimes \rho_\t{ext}^\sA]\Big)\leq\nonumber\\
&F\Big(\rho', \t{Tr}_{ext} (\widetilde{U}_i^{\dag}\otimes\Id)[\prjct{\phi^+}\otimes \rho_\t{ext}^\sA]\Big)
\label{eq:prp3.2}
\end{align}
where we used the invariance of fidelity under unitary transformations and the fact that it can only increase when subsystems are traced out. Note that the right term in the last fidelity can be simply written as
\begin{align}\label{eq: Inj map}
&\text{Tr}_{ext} (\widetilde{U}_i^{\dag}\otimes\Id)[\prjct{\phi^+}\otimes \text{Tr}_{\cH_i^\sA} \widetilde{\Phi}_i[ \rho_\sA]]=\nonumber\\
&\Lambda_i^\sA\otimes \Id[\prjct{\phi^+}\otimes\rho_\sA]
\end{align}
defining the injection map $\Lambda_i^\sA$. Hence, we have
\be
F\Big(\rho', \Lambda_i^\sA\otimes \Id[\prjct{\phi^+}\otimes\rho_\sA]\Big)\geq F_i.
\ee

Now we apply the map $(\Lambda_o^\sA\!\circ\cE\otimes\Id)$ on both states\footnote{Remark that both $\cE$ and $\Lambda_o^\sA$ only act nontrivially on the "physical" system $\rho'$, the additional subsystems added by the isometries are \sout{simply simply} traced out by the composed map $\Lambda_o^\sA\!\circ\cE$.} in the last equation.  The processing inequality ensures that the fidelity can only increase by doing so, therefore
\begin{align}\label{eq:prp356}
\nonumber
& F\Big((\Lambda_o^\sA\!\circ\cE\otimes \Id)[\rho'],(\Lambda_o^\sA\!\circ\cE\circ \Lambda_i^\sA \otimes\Id)[\prjct{\phi^+}\otimes \rho_\sA] \Big) \\
&\geq F^i.
\end{align}

Next, we use the equivalence of the triangle inequality for the Unlmann fidelity
\be\label{eq:trapp}
\t{arccos}(F(\varrho_1,\varrho_3))\leq \t{arccos}(F(\varrho_1,\varrho_2)) +\t{arccos}(F(\varrho_2,\varrho_3))
\ee
for the states
\begin{align}
\varrho_1 &= (\Lambda_o^\sA\circ\cE\circ\Lambda_i^\sA\otimes \Id)[\prjct{\phi^+}\otimes \rho_\sA]\\
\varrho_2 &= (\Lambda_o^\sA\circ\cE\otimes \Id)[\rho']\\
\varrho_3 &=(\overline \cE\otimes \Id)[\prjct{\phi^+}].
\end{align}
Eq.~\eqref{eq:prp34} directly gives $F(\varrho_2,\varrho_3)=F^o$ while  Eq.~\eqref{eq:prp356} gives the bound $ F(\varrho_1,\varrho_2)\geq F_i$, which in turn implies $ \t{arccos}(F(\varrho_1,\varrho_2))\leq \t{arccos}(F_i)$. This leads to
\begin{align}\nonumber
\arccos(F(\varrho_1,\varrho_2)) \leq\arccos(F^i)+\arccos(F^o).\nonumber
\end{align}
Finally, noticing that 
\begin{align}
\nonumber
\cF(\cE,\overline \cE) = F(\varrho_1,\varrho_2)
\end{align}
implies
\be
\arccos(\cF(\cE,\overline \cE)) \leq\arccos(F^i)+\arccos(F^o).\nonumber
\ee
\end{proof}

In the proof of the previous proposition, we have explicitly constructed the injection map $\Lambda_i^\sA$ identifying input subspaces on which the channel $\cE_{i\text{-}o}$ acts. This map is constructed from the extraction map $\widetilde{\Lambda}_i^\sA$ in the certificate of the input state Eq.~(2) of the main text. We now show that a similar result holds in the multipartite case where the channel $\cE$ implements an interaction between several physical systems.

\begin{proposition}\label{proposition:B2}
Let $\rho$ be a quantum state shared between the sides $\sA$ and $\sB$, $\sA$ consisting of several parties $\{A^{(k)}\}_{k=1}^n.$ Let also $\cE$ be a channel acting jointly on all the parties on side $\sA$, $\overline \cE : L\left(\bigotimes_{k=1}^n\cH_i^{(k)}\right)\to L\left(\bigotimes_{k=1}^n\cH_o^{(k)}\right)$ a reference channel, and $\ket{\phi^+}\in \cH_i\otimes\cH_i$ a maximally entangled state.
Given the maps $\widetilde{\Lambda}_i^{A^{(k)}} [\cdot]=\textnormal{Tr}_{\widetilde{\t{ext}}_k}(\widetilde{\Phi}_i^{A^{(k)}}[\cdot])$, $\Lambda_o^{A^{(k)}} [\cdot]=\textnormal{Tr}_{\t{ext}_k}(\Phi_o^{A^{(k)}}[\cdot])$ acting locally on the respective parties $A^{(k)}$  and $\Lambda^\sB[\cdot]=\textnormal{Tr}_\textnormal{ext}(\Phi^\sB[\cdot])$ acting on $\sB$, and the product maps \begin{align}
\widetilde{\Lambda}_i^\sA&=\widetilde{\Lambda}_i^{A^{(1)}}\tot\dots\tot\widetilde{\Lambda}_i^{A^{(n)}}\nonumber\\ \nonumber \Lambda_o^\sA&=\Lambda_o^{A^{(1)}}\tot\dots\tot\Lambda_o^{A^{(n)}},
\end{align}
the following two equations 
\begin{eqnarray}
F^i&=&F\left((\widetilde{\Lambda}^\sA_i\otimes \Lambda^\sB) [\rho],\ketbra{\phi^+}{\phi^+}\right)\nonumber\\
F^o&=&F\left((\Lambda^\sA_o\otimes\Lambda^\sB)\left[(\cE\otimes \Id_\sB)[\rho]\right],(\overline\cE\otimes\Id)\left[\prjct{\phi^+}\right]\right), \nonumber
\end{eqnarray}
imply the following bound on the fidelity between the  channels $\cE$ and $\overline\cE$
\begin{equation}\label{eq:ineqFF}
\arccos(\cF(\cE,\overline\cE)) \leq \arccos(F^i)+\arccos(F^o).
\end{equation}
In addition, the injection and extraction maps $\Lambda_i$ and $\Lambda_o$ in the definition of $\cF(\cE,\overline\cE)$ (see Eq.~(4) of the main text) also have a local structure
\begin{align}\label{eq:propr5phi}
    \Lambda_i^\sA &= \Lambda_i^{(1)}\tot\dots\tot \Lambda_i^{(n)}\\
    \Lambda_o^\sA &= \Lambda_o^{(1)}\tot \dots \tot \Lambda_o^{(n)},
\end{align}
as described in Appendix A.2.
\end{proposition}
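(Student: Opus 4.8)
The plan is to mimic the proof of Proposition \ref{proposition:B1} step by step, carrying the tensor-product structure along the way. First I would eliminate all external systems on side $\sB$ exactly as before, introducing $\rho' = (\Id_\sA \otimes \Lambda^\sB)[\rho]$, which lets me rewrite $F^i$ and $F^o$ with $\rho'$ in place of $\rho$ and with only the bare maps $\widetilde\Lambda_i^\sA = \bigotimes_k \widetilde\Lambda_i^{A^{(k)}}$ and $\Lambda_o^\sA = \bigotimes_k \Lambda_o^{A^{(k)}}$ acting on side $\sA$. Since the global side-$\sA$ isometry $\widetilde\Phi_i^\sA = \bigotimes_k \widetilde\Phi_i^{A^{(k)}}$ already has a product form, the decomposition $\widetilde\Phi_i^\sA \otimes \Id = (\widetilde U_i \otimes \Id)[\,\cdot\, \otimes \prjct{\widetilde{\t{ext}}}_i^\sA]$ can be taken with $\widetilde U_i = \bigotimes_k \widetilde U_i^{(k)}$ a product unitary and $\ket{\widetilde{\t{ext}}}_i^\sA = \bigotimes_k \ket{\widetilde{\t{ext}}}_i^{(k)}$ a product ancilla.

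Next I would repeat the chain of inequalities in \eqref{eq:prp3.1}--\eqref{eq:prp356}: apply Proposition \ref{co:lemma1} to pull $F^i$ back to a fidelity on the full isometric image against $\prjct{\phi^+}\otimes\rho_\t{ext}^\sA$, use unitary invariance of $F$ to move $\widetilde U_i^\dagger$ onto the target, use the processing inequality to trace out the ancilla, and thereby define the injection map $\Lambda_i^\sA[\,\cdot\,] = \t{Tr}_{\t{ext}}(\widetilde U_i^\dagger \otimes \Id)[\prjct{\phi^+}\otimes \t{Tr}_{\cH_i^\sA}\widetilde\Phi_i^\sA[\rho_\sA]]$ exactly as in Eq.~\eqref{eq: Inj map}. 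The only new observation is bookkeeping: because $\widetilde U_i$, $\ket{\widetilde{\t{ext}}}_i^\sA$, and the traces are all products over $k$, and because $\rho_\sA$ enters only through $\t{Tr}_{\cH_i^\sA}\widetilde\Phi_i^\sA[\rho_\sA]$, the resulting $\Lambda_i^\sA$ factorizes as $\bigotimes_k \Lambda_i^{(k)}$ where each $\Lambda_i^{(k)}: L(\cH_i^{(k)}) \to L(\cH_{A^{(k)}})$ is built from $\widetilde U_i^{(k)\dagger}$ and the appropriate marginal of $\rho_\sA$ — this is precisely the claimed local structure in \eqref{eq:propr5phi}. Then applying $(\Lambda_o^\sA \circ \cE \otimes \Id)$ to both states, noting $\Lambda_o^\sA = \bigotimes_k \Lambda_o^{(k)}$ is already a product by hypothesis, gives the analogue of Eq.~\eqref{eq:prp356}, and the triangle inequality \eqref{eq:trapp} applied to $\varrho_1 = (\Lambda_o^\sA \circ \cE \circ \Lambda_i^\sA \otimes \Id)[\prjct{\phi^+}\otimes\rho_\sA]$, $\varrho_2 = (\Lambda_o^\sA\circ\cE\otimes\Id)[\rho']$, $\varrho_3 = (\overline\cE\otimes\Id)[\prjct{\phi^+}]$ closes the argument, since $\cF(\cE,\overline\cE) \geq F(\varrho_1,\varrho_3)$ holds because $\Lambda_i^\sA,\Lambda_o^\sA$ are admissible product maps in the definition \eqref{eq:FE} of $\cF$.

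The main obstacle — and really the only place the multipartite case differs from Proposition \ref{proposition:B1} — is verifying that the injection map $\Lambda_i^\sA$ constructed from the input-certificate isometries genuinely inherits a tensor-product structure over the parties $A^{(k)}$, and in particular that $\Lambda_i^{(k)}$ depends only on data local to party $k$ (its piece of the source state $\rho_{A^{(k)}}$ if sources are independent, or at worst the $A^{(k)}$-marginal of $\rho_\sA$). One must be slightly careful that the partial traces $\t{Tr}_{\cH_i^\sA}$ and $\t{Tr}_{\t{ext}}$ distribute correctly over the product and that no cross-party correlation is secretly introduced when $\rho_\sA$ is not itself a product state; the clean statement is that $\Lambda_i^\sA = \bigotimes_k \Lambda_i^{(k)}$ with $\Lambda_i^{(k)}[\varrho] = \Lambda_i^{A^{(k)}}[\t{Tr}_{\overline{A^{(k)}}}(\rho_\sA) \otimes \varrho]$ up to the unitary dilation, which is local by construction. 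Once this structural point is granted, every remaining step is a verbatim transcription of the bipartite proof with $\otimes_k$ symbols carried along, and the bound \eqref{eq:ineqFF} together with the product form \eqref{eq:propr5phi} follows immediately.
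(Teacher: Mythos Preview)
Your proposal is correct and follows essentially the same route as the paper's proof: reduce to the bipartite argument of Proposition~\ref{proposition:B1}, and observe that because $\widetilde\Phi_i^\sA=\bigotimes_k\widetilde\Phi_i^{A^{(k)}}$ factorizes, so does the dilating unitary $\widetilde U_i=\bigotimes_k\widetilde U_i^{(k)}$, and hence so does the injection map $\Lambda_i^\sA$ built from it in Eq.~\eqref{eq: Inj map}; the extraction side is product by hypothesis, and the triangle inequality closes as before.

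One small caution on your ``obstacle'' paragraph: the local injection maps $\Lambda_i^{(k)}$ that enter the definition of $\cF(\cE,\overline\cE)$ have domain $L(\cH_{A^{(k)}}\otimes\cH_i^{(k)})$, not $L(\cH_i^{(k)})$ --- the state $\rho_\sA$ is an \emph{argument} of the full recipe $\cE_{i\text{-}o}[\varrho]=\Lambda_o\circ\cE\circ\Lambda_i[\rho_\sA\otimes\varrho]$, not something baked into each $\Lambda_i^{(k)}$. In particular you should not claim that $\Lambda_i^{(k)}$ depends on the marginal $\t{Tr}_{\overline{A^{(k)}}}(\rho_\sA)$; that absorption is exactly the content of the subsequent Corollary and requires the independent-sources assumption. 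The product structure of $\Lambda_i^\sA$ as a map on $\cH_\sA\otimes\cH_i$ holds regardless of correlations in $\rho_\sA$, so no cross-party contamination can arise at the level of the maps themselves.
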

\begin{proof}
The proof is a simple modification of the proof of Proposition \ref{proposition:B1}. It is similar until Eq.~\eqref{eq:prp3.1}
\be\label{eq:propB2}
F^i= F\Big((\widetilde{\Phi}_i^\sA\otimes \Id)[\rho'],\prjct{\phi^+}\otimes \rho_\t{ext}^\sA \Big).
\ee
At this point we use the product structure of the isometry $\widetilde{\Phi}_i^\sA= \widetilde{\Phi}_i^{A^{(1)}}\otimes\dots\otimes\widetilde{\Phi}_i^{A^{(n)}}$, where each isometry acting on the party $A^{(k)}$ can be written as $\widetilde{\Phi}_i^{A^{(k)}}[\cdot] = \widetilde{U}_i^{(k)}[(\cdot)\tot \prjct{\widetilde{\t{ext}}}_i^{(k)}]$.
Hence the unitary in \eqref{eq:prp3.2} also has a product structure $\widetilde{U}_i=\widetilde{U}_i^{(1)}\otimes\dots\otimes\widetilde{U}_i^{(n)}$, which is inherited by the injection map $\Lambda_i^\sA$ defined in Eq.~\eqref{eq: Inj map}. The map $\Lambda_o^{\sA}$ have a product structure by definition. The rest simply follows the proof of proposition \ref{proposition:B1}. 
\end{proof}

To conclude, we notice that in the case where the quantum state $\rho$ is produced by independent sources it has the form
\begin{equation}
    \rho = \rho_{A^{(1)}\text{-}B^{(1)}}\otimes \dots\otimes \rho_{A^{(n)}\text{-}B^{(n)}}\otimes\rho^*,
\end{equation}
where each state $\rho_{A^{(k)}\text{-}B^{(k)}}$ is shared between the parties $A^{(k)}$ and $B^{(k)}$, and $\rho^*$ is the initial state of the device. Consequently, the marginal state of Alice is product with respect to all subsystems 
\be
\rho_\sA=\bigotimes_{k=1}^n \rho_{A^{(k)}},
\ee
with $\rho_{A^{(k)}} = \text{Tr}_{B^{(k)}} \rho_{A^{(k)}\text{-}B^{(k)}}$. This allows to conveniently absorb each state into the local injection map $\Lambda_i^{(i)}$ performed locally by the corresponding subsystem
\be
\Lambda'^{A^{(\ell)}}_i[\cdot] = \Lambda^{A^{(\ell)}}_i[\,\cdot\otimes \rho_{A^{(\ell)}}] \quad \forall \ell=1,\dots,n
\ee

Hence, in the proof of the above proposition the state of the source $\rho_\sA$ can be absorbed in this novel definition of the injection map $\Lambda'^\sA_i$, except for the degrees of freedom that describe the initial state of the measurement device itself $\rho^*$ (which may specify, for example, that the device is plugged in).
Considering the initial internal state of the device $\rho^*$ as part of the channel $\cE$, we can write the following corollary 

\begin{corollary}
Given all the condition of the proposition \ref{proposition:B2}, if in addition the sources for each input subsystem of Alice are independent, it follows that 
\begin{equation}
\arccos(\cF'(\cE,\overline\cE)) \leq \arccos(F^i)+\arccos(F^o).\nonumber
\end{equation}
where
\be
\cF'(\cE,\overline\cE) = F\Big((\Lambda_o^\sA\circ \cE \circ {\Lambda'}_i^\sA)\otimes \Id [\prjct{\phi^+}], 
\overline{\cE}\otimes\Id [\prjct{\phi^+}]\Big)
\ee
and the maps $\Lambda_o^\sA$ and ${\Lambda'}_i^\sA$ have a tensor product structure with respect to subsystems of $\sA$.
\end{corollary}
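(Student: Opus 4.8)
The plan is to deduce the corollary directly from Proposition~\ref{proposition:B2} by exploiting the product form that the marginal state $\rho_\sA$ acquires once the sources are assumed independent. First I would apply Proposition~\ref{proposition:B2} verbatim: it already delivers $\arccos(\cF(\cE,\overline\cE))\leq\arccos(F^i)+\arccos(F^o)$, together with the guarantee that the injection and extraction maps entering $\cF(\cE,\overline\cE)$ factorize over the parties, $\Lambda_i^\sA=\Lambda_i^{(1)}\tot\dots\tot\Lambda_i^{(n)}$ and $\Lambda_o^\sA=\Lambda_o^{(1)}\tot\dots\tot\Lambda_o^{(n)}$, with the recipe $\cE_{i\text{-}o}[\varrho]=\Lambda_o^\sA\circ\cE\circ\Lambda_i^\sA[\rho_\sA\otimes\varrho]$ still referring explicitly to the unknown source state $\rho_\sA$.

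Next I would invoke the independence hypothesis to write $\rho=\rho_{A^{(1)}\text{-}B^{(1)}}\otimes\dots\otimes\rho_{A^{(n)}\text{-}B^{(n)}}\otimes\rho^*$, so that, after tracing out side $\sB$, $\rho_\sA=\big(\bigotimes_{k=1}^n\rho_{A^{(k)}}\big)\otimes\rho^*$ with $\rho_{A^{(k)}}=\text{Tr}_{B^{(k)}}\rho_{A^{(k)}\text{-}B^{(k)}}$. Since $\Lambda_i^\sA$ already splits as a tensor product of maps $\Lambda_i^{(k)}:L(\cH_{A^{(k)}}\otimes\cH_i^{(k)})\to L(\cH_{A^{(k)}})$, and since each $\rho_{A^{(k)}}$ is a fixed state that party $A^{(k)}$ can prepare on its own, I would absorb $\rho_{A^{(k)}}$ into the $k$-th injection map by setting ${\Lambda'}_i^{(k)}[\sigma]=\Lambda_i^{(k)}[\rho_{A^{(k)}}\otimes\sigma]$. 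After the obvious regrouping of the ancilla and $\cH_i^{(k)}$ factors per party, one has $\big(\bigotimes_k\Lambda_i^{(k)}\big)\big[\big(\bigotimes_k\rho_{A^{(k)}}\big)\otimes\varrho\big]=\big(\bigotimes_k{\Lambda'}_i^{(k)}\big)[\varrho]$, so the redefined map ${\Lambda'}_i^\sA={\Lambda'}_i^{(1)}\tot\dots\tot{\Lambda'}_i^{(n)}$ keeps the tensor-product structure while no longer depending on $\rho_\sA$. The only part of $\rho_\sA$ that is not of this product-over-parties form is $\rho^*$, the device's internal state; following the convention of Appendix~A.2 I would absorb $\rho^*$ and the Hilbert space $\cH^*$ into the channel $\cE$ itself, which is legitimate precisely because $\rho^*$ is produced by the device and, thanks to the independence assumption, factors out of $\rho$.

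After these redefinitions the recipe reads $\cE_{i\text{-}o}=\big(\bigotimes_k\Lambda_o^{(k)}\big)\circ\cE\circ\big(\bigotimes_k{\Lambda'}_i^{(k)}\big)$, which is exactly the map whose Choi fidelity with $\overline\cE$ defines $\cF'(\cE,\overline\cE)$; hence $\cF'(\cE,\overline\cE)=\cF(\cE,\overline\cE)$ for the maps furnished by Proposition~\ref{proposition:B2}, and the claimed inequality $\arccos(\cF'(\cE,\overline\cE))\leq\arccos(F^i)+\arccos(F^o)$ follows immediately, with $\Lambda_o^\sA$ and ${\Lambda'}_i^\sA$ of tensor-product form by construction. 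I do not expect a genuine obstacle here, since the corollary is essentially a restatement of Proposition~\ref{proposition:B2} under one extra hypothesis; the only point requiring care is the bookkeeping around $\rho^*$, namely checking that peeling off the device's internal state leaves the remaining injection maps in product form and that reclassifying $\rho^*$ as part of $\cE$ is consistent with the black-box treatment of the tested device used throughout.
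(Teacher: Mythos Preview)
Your proposal is correct and follows essentially the same approach as the paper: apply Proposition~\ref{proposition:B2}, use the independence assumption to factor $\rho_\sA$ over parties (plus $\rho^*$), absorb each $\rho_{A^{(k)}}$ into the corresponding local injection map to define ${\Lambda'}_i^{(k)}$, and absorb the device's internal state $\rho^*$ into the channel $\cE$. The paper's own proof is literally ``We just presented the proof,'' referring to exactly this discussion in the paragraphs preceding the corollary.
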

\begin{proof}
We just presented the proof.
\end{proof}

\noindent As discussed in Appendix A.2, this result would not be  possible without the assumption of independent sources.

\vspace{20 pt}

\noindent\textbf{Appendix C} \textit{Robustness of channel certification in presence of white noise--} Here we illustrate the robustness of the bounds obtained for the fidelity of the single qubit unitary channel and the two qubit CNOT gate. To this end, we consider a simple model where each element suffers from white noise. In particular, each source produces the two qubit mixed state 
\begin{equation}
\rho(\epsilon_\text{S}) = (1-\epsilon_\text{S})\prjct{\phi^+_2} +\epsilon_\t{S} \frac{1}{4}\Id,
\end{equation}
that is, with probability $\epsilon_\t{S},$ it fails to output the maximally entangled two-qubit state $\ket{\phi^+_2}$ and produces a totally depolarized two-qubit state $\frac{1}{4}\Id$ instead. 
The measurement devices output a random result with probability $\epsilon_\t{M}$, which corresponds to replacing each Pauli by $X(Z)\to (1-\epsilon_\t{M}) X(Z)$ in the Bell operator. Similarly, the channel $\cE$ fails to perform the desired operation with probability $\epsilon_\t{C}$, in which case it outputs a totally depolarized state of the corresponding dimension.

Within such a model, the calculation of expected Bell values is straightforward and gives
\begin{align}
\beta^i = & 2\sqrt{2}(1-\epsilon_\t{S})(1-\epsilon_\t{M})^2\\
\beta^o = & 2\sqrt{2}(1-\epsilon_\t{S})(1-\epsilon_\t{M})^2(1-\epsilon_\t{C})\\
\beta^i_1 = & \beta^i_2=\beta^i\\
\gamma^o =
&(1-\epsilon_\t{S})(1-\epsilon_\t{M})^2(1-\epsilon_\t{C})\times\nonumber\\&\frac{1}{5}\Big(5 - 3 \epsilon_\t{M} (2 - \epsilon_\t{M})(1 - \epsilon_\t{S}) - 
  3 \epsilon_\t{S}\Big).
\end{align}
Fig.~\ref{fig:WhiteNoise} shows  the resulting bound on the fidelity of the identity channel (left) and the CNOT gate (right) as functions of the channel noise ($\epsilon_\t{C}=\epsilon_\t{C}^\t{Id}$ and $\epsilon_\t{C}^\t{CNOT}$ respectively), assuming that $\epsilon_\t{S}=\epsilon_\t{M}=\epsilon_\t{Setup}$.\\

\begin{figure*}[t!]
\centering
\includegraphics[width=0.85\textwidth]{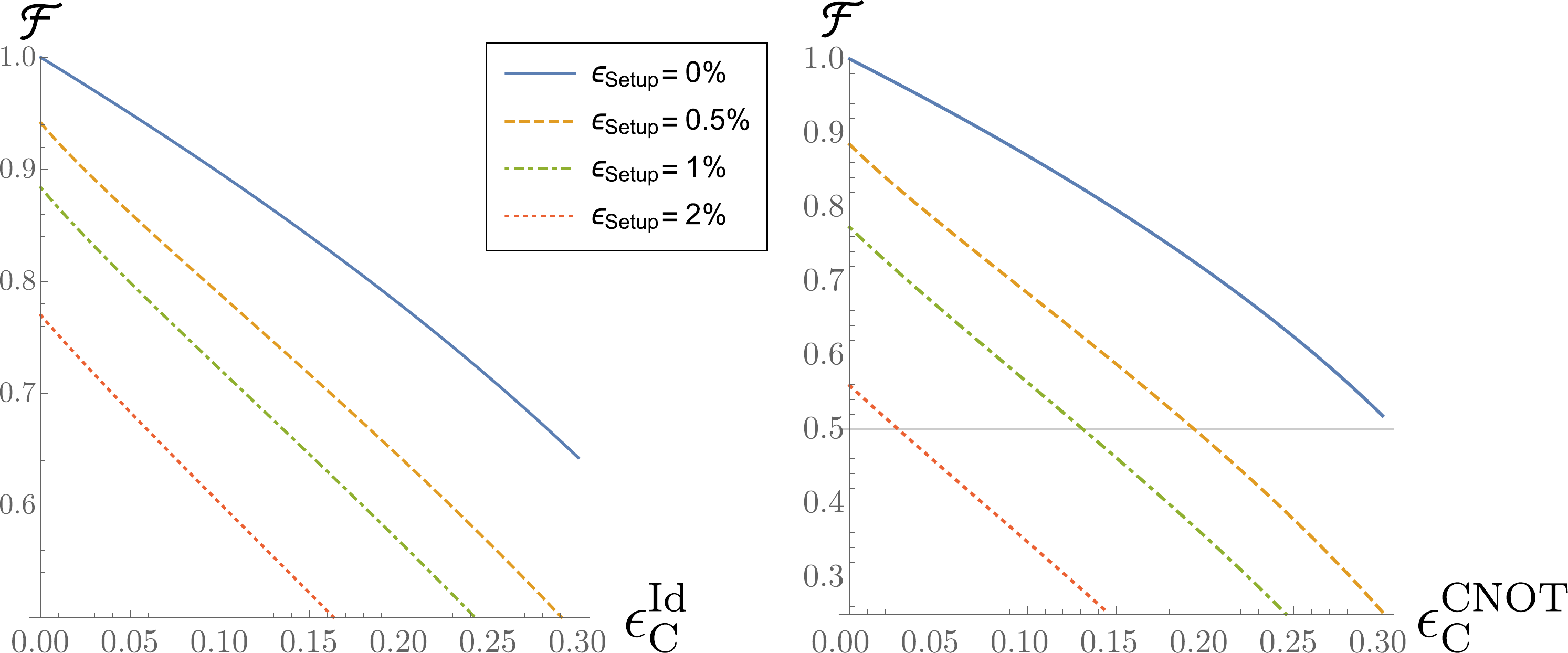}
\caption{ Device-independently certified fidelity $\cF$ for the identity channel (left) and the CNOT gate (right) in presence of white noise on the certified elements $(\epsilon_\t{C}^{\text{Id(CNOT)}})$ and on the other elements of the setup, i.e the sources and measurements $(\epsilon_\text{Setup}$).}
\label{fig:WhiteNoise}
\end{figure*}

\vspace{20 pt}

\noindent \textbf{Appendix D} \textit{Systematic approach to robustly certify N-qubit states device-independently--}  In this appendix, we describe an approach for the device-independent, robust state certification that allowed us to certify any two-qubit controlled unitary gate. We provide a systematic recipe that can be applied to arbitrary N-qubit states, even though its convergence is not guaranteed. Appendix D.2 and D.3 exploits the Jordan lemma to show how one can obtain device-independent bounds on the fidelity of N-qubit states from a given Bell test. Appendix D.4 shows how to devise Bell tests tailored to the robust certification of N-qubit states. Appendix D.5 finally applies our approach to the maximally entangled two-qubit state and the four-qubit states obtained by applying $CU_\varphi$ gates on two maximally entangled two-qubit states. Before we start, we discuss a figure of merit, the overlap, to compare two quantum states. \\

\textbf{Appendix D.1} \textit{Hilbert-Schmidt inner product and relation to the Uhlmann fidelity--} 
The Hilbert-Schmidt inner product (aka the overlap) between two states $\rho$ and $\overline \rho$ defined on the same Hilbert space is given by
\be
O(\rho,\overline \rho)=\Tr{\rho \overline \rho}.
\ee
The overlap does not satisfy some nice properties of the Uhlmann fidelity, but has the advantage to be a linear function of the quantum states and thus allows one to use efficient optimization techniques. There are several known results and methods for robust state certification~\cite{Yang14,Kaniewski16,Coladangelo17,Supic17}, that provide bounds on the overlap between the state to be certified and the target state.
As we now show, bounds on the Hilbert-Schmidt product can be directly used to bound the Uhlmann fidelity needed in Eq.~(2)-(3) of the main text.

\begin{lemma}\label{lemma:overlap}
Given two quantum states $\rho$, $\overline\rho$, the following relation holds:
\begin{equation}
F(\rho,\overline\rho) \geq \sqrt{\Tr{\rho \overline \rho}},
\end{equation}
with equality when one of the two states is pure.
\end{lemma}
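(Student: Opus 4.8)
The plan is to establish $F(\rho,\overline\rho) \geq \sqrt{\Tr{\rho\overline\rho}}$ by reducing the general case to the pure-state case, where the two quantities coincide. First I would recall that the Uhlmann fidelity admits the variational characterization $F(\rho,\overline\rho) = \max_{\ket{\psi_\rho},\ket{\psi_{\overline\rho}}} |\braket{\psi_\rho}{\psi_{\overline\rho}}|$, where the maximum runs over all purifications of $\rho$ and $\overline\rho$ on a common enlarged Hilbert space. Equivalently, one can fix a purification $\ket{\psi_\rho}$ of $\rho$ and purification $\ket{\psi_{\overline\rho}}$ of $\overline\rho$ and use that $F$ is non-decreasing under the partial trace (the processing inequality for fidelity, already invoked repeatedly in Appendix~A and~B): $F(\rho,\overline\rho) = F(\Tr{\cdot}\prjct{\psi_\rho},\Tr{\cdot}\prjct{\psi_{\overline\rho}}) \geq F(\prjct{\psi_\rho},\prjct{\psi_{\overline\rho}}) = |\braket{\psi_\rho}{\psi_{\overline\rho}}|$ for any choice of purifications.

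The key step is then to exhibit a particular pair of purifications whose overlap squared is exactly $\Tr{\rho\overline\rho}$. The natural choice is the ``canonical'' purification built from the positive square roots: on $\cH\otimes\cH$ set $\ket{\psi_\rho} = (\sqrt{\rho}\otimes\Id)\ket{\Omega}$ and $\ket{\psi_{\overline\rho}} = (\sqrt{\overline\rho}\otimes\Id)\ket{\Omega}$, where $\ket{\Omega} = \sum_j \ket{j}\ket{j}$ is the unnormalized maximally entangled vector. These are genuine purifications since $\Tr{\cdot}_2 \prjct{\psi_\rho} = \sqrt{\rho}\,\Id^T\sqrt{\rho} = \rho$ and likewise for $\overline\rho$. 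Their inner product is $\braket{\psi_\rho}{\psi_{\overline\rho}} = \bra{\Omega}(\sqrt{\rho}\sqrt{\overline\rho}\otimes\Id)\ket{\Omega} = \Tr{\sqrt{\rho}\sqrt{\overline\rho}}$, which is real and non-negative. Hence $F(\rho,\overline\rho) \geq \Tr{\sqrt{\rho}\sqrt{\overline\rho}}$. To finish, I would bound $\Tr{\sqrt{\rho}\sqrt{\overline\rho}} \geq \Tr{\rho\,\overline\rho}^{1/2}$: writing $A = \sqrt{\rho}$, $B = \sqrt{\overline\rho}$, this is the operator inequality $\Tr{AB} \geq \sqrt{\Tr{A^2 B^2}}$ for positive semidefinite $A,B$, which follows from Cauchy--Schwarz in the Hilbert--Schmidt inner product together with $\Tr{A^2B^2} \leq \Tr{(AB)^2} \leq (\Tr{AB})^2$ — here $\Tr{A^2B^2}\le\Tr{(AB)^2}$ uses that $\Tr{A^2B^2} = \Tr{(AB)^\dagger(BA)} \le \|AB\|_2\|BA\|_2 = \|AB\|_2^2 = \Tr{(AB)^\dagger AB}$... — actually the cleanest route is: $\Tr{AB}^2 \ge \Tr{(AB)^2} \ge \Tr{A^2B^2}$, the first from the fact that the eigenvalues of the positive operator $\sqrt{A}B\sqrt{A}$ (which has the same spectrum as $AB$) have sum-of-squares $\le$ square-of-sum, and the second from $\Tr{A^2B^2} = \|AB\|_2^2$... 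I would double-check the direction and state it as a short sublemma.

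For the equality case: if $\overline\rho = \prjct{\psi}$ is pure, then $\sqrt{\overline\rho} = \overline\rho$, so $\Tr{\sqrt{\rho}\sqrt{\overline\rho}} = \bra{\psi}\sqrt{\rho}\ket{\psi}$ and $\Tr{\rho\overline\rho} = \bra{\psi}\rho\ket{\psi}$; one checks directly $F(\rho,\prjct{\psi}) = \Tr{\sqrt{\prjct{\psi}\rho\prjct{\psi}}} = \sqrt{\bra{\psi}\rho\ket{\psi}}$ (exactly the computation already done at the end of Lemma~\ref{lemma:1}'s proof), and meanwhile $\bra{\psi}\sqrt{\rho}\ket{\psi} \geq \sqrt{\bra{\psi}\rho\ket{\psi}}$ by Jensen applied to the concave square root, with equality iff $\ket{\psi}$ is an eigenvector of $\rho$ — but here we only need $F = \sqrt{\Tr{\rho\overline\rho}}$, and both sides of the claimed chain collapse to $\sqrt{\bra{\psi}\rho\ket{\psi}}$, giving equality. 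The main obstacle I anticipate is getting the chain of trace inequalities $\Tr{AB} \geq (\Tr{A^2B^2})^{1/2}$ correct and self-contained for non-commuting positive operators; everything else is a standard purification argument already in the spirit of the paper's Appendix~A.
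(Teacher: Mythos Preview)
Your proof plan has a genuine gap: the intermediate inequality $\Tr{\sqrt{\rho}\sqrt{\overline\rho}} \geq \sqrt{\Tr{\rho\,\overline\rho}}$ that you need after the purification step is \emph{false} in general. Take $\rho = \prjct{0}$ and $\overline\rho = \prjct{+}$ with $\ket{+} = (\ket{0}+\ket{1})/\sqrt{2}$. Then $\sqrt{\rho}=\rho$ and $\sqrt{\overline\rho}=\overline\rho$, so $\Tr{\sqrt{\rho}\sqrt{\overline\rho}} = |\braket{0}{+}|^2 = \tfrac{1}{2}$, while $\sqrt{\Tr{\rho\,\overline\rho}} = \tfrac{1}{\sqrt{2}}$. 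Your chain $(\Tr{AB})^2 \geq \Tr{(AB)^2} \geq \Tr{A^2B^2}$ breaks at the second step: for Hermitian $A,B$ one has the identity
\[
\Tr{A^2B^2} - \Tr{ABAB} = \tfrac{1}{2}\,\Tr{[A,B]^\dagger[A,B]} \geq 0,
\]
so the inequality goes the \emph{other} way whenever $A$ and $B$ fail to commute. The canonical-purification bound $F(\rho,\overline\rho) \geq \Tr{\sqrt{\rho}\sqrt{\overline\rho}}$ is correct, but it is simply too weak to reach $\sqrt{\Tr{\rho\,\overline\rho}}$; in the example above it gives $F \geq \tfrac{1}{2}$ while the lemma demands (and the true value is) $F = \tfrac{1}{\sqrt{2}}$.

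The paper's proof avoids purifications entirely. It writes $\overline\rho = \sum_i p_i \prjct{\overline\psi_i}$, invokes concavity of $F^2$ in its second argument to get
\[
F(\rho,\overline\rho)^2 \geq \sum_i p_i\, F(\rho,\prjct{\overline\psi_i})^2 = \sum_i p_i\, \bra{\overline\psi_i}\rho\ket{\overline\psi_i} = \Tr{\rho\,\overline\rho},
\]
and reads off the pure-state equality case directly (the inequality is saturated when the sum has a single term, and symmetry of $F$ handles the other case). This is a three-line argument with no operator-inequality subtleties. If you wish to salvage a purification-based route, you would need a different choice of purifications or a different intermediate quantity, since the canonical pair does not suffice.
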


\begin{proof}
With $\overline\rho=\sum_i p_i \ketbra{\overline\psi_i}{\overline\psi_i}$ we have
\begin{equation}
\begin{split}
F(\rho,\overline\rho)^2&=F(\rho,\sum_i p_i \ketbra{\overline\psi_i}{\overline\psi_i})^2\\
&\geq \sum_i p_i F(\rho,\ketbra{\overline\psi_i}{\overline\psi_i})^2\\
&= \sum_i p_i \tr{\rho\ketbra{\overline\psi_i}{\overline\psi_i}}\\
&= \tr{\rho\overline\rho}.
\end{split}
\end{equation}
Here we used the concavity of the square of the Uhlmann fidelity with respect to its second argument~\cite{Mendonca08}. The corresponding inequality is saturated when $\overline\rho$ is pure. By symmetry of the Uhlmann fidelity, equality is also achieved when $\rho$ is pure.
\end{proof}

The definition of the Hilbert-Schmidt product naturally generalizes to the case with a state $\varrho$ of the global system defined on $\cH_\t{sys}\otimes\cH_\t{ext}$ and a state $\overline \rho$ of a subsystem on $\cH_\t{sys}$ 
\be
O(\varrho ,\overline \rho) =\text{Tr}_\t{sys}\Big(\text{Tr}_\t{ext}({\varrho}) \, \overline \rho \Big),
\ee
as it is commonly done. In such a case Lemma\ref{lemma:overlap} ensures that $F(\text{Tr}_\t{ext}({\varrho}), \overline \rho)\geq O(\varrho ,\overline \rho) $ with equality when $\overline \rho$ is pure.


\textbf{Appendix D.2} \textit{Device independent certification of states using Jordan's lemma--}
In this subsection, we show how the device-independent certification of an $N$-qubit state can be reduced to an $N$-qubit problem. This reduction is only possible if certain requirements are fulfilled: (i) The Bell test has to involve at most two different measurements  per party, and (ii) these measurements must be binary. 
If these requirements are fulfilled we can use Jordan's lemma, which states the following~\cite{Scarani12}: 

\begin{lemma}
Let $X$ and $Z$ be two Hermitian operators with eigenvalues $-1$ and $+1$. Then there exists a basis in which both operators are block-diagonal,  with blocks of dimension $2\times2$ at most.
\end{lemma}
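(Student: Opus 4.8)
The plan is to prove this by a standard argument about the algebra generated by two involutions (since $X$ and $Z$ are Hermitian with eigenvalues $\pm 1$, they are reflections, i.e.\ $X^2=Z^2=\Id$). First I would consider the Hermitian operator $Y = X + Z$ and diagonalize it; the blocks will be identified with its eigenspaces. The key observation is that $X$ maps each eigenspace of $Y$ either to itself or to another eigenspace: from $X Y X = X(X+Z)X = \Id X + X Z X = X + XZX$, and since $XZX$ has the same spectrum as $Z$ one checks $X Y = (X + XZX) X$, which is not immediately what we want. A cleaner route is to note that $XZ + ZX$ commutes with both $X$ and $Z$: indeed $X(XZ+ZX) = Z + XZX$ and $(XZ+ZX)X = XZX + Z$, so $[X, XZ+ZX]=0$, and symmetrically $[Z,XZ+ZX]=0$. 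Hence the Hermitian operator $C := XZ + ZX$ lies in the commutant of the algebra generated by $X$ and $Z$.

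Next I would decompose the Hilbert space into the eigenspaces $\cH_\lambda$ of $C$. On each such eigenspace, $X$ and $Z$ act (since they commute with $C$) and satisfy $XZ + ZX = \lambda\Id$ together with $X^2 = Z^2 = \Id$. These are exactly the defining relations of (a representation of) the Clifford-type algebra on two generators with a fixed ``anticommutator-like'' constant. I would then invoke the elementary representation theory of this two-generator algebra: its irreducible representations are at most two-dimensional. Concretely, within $\cH_\lambda$ pick any unit vector $\ket{v}$; the span of $\{\ket{v}, X\ket{v}, Z\ket{v}, XZ\ket{v}\}$ is invariant under both $X$ and $Z$, and using $X^2=Z^2=\Id$ and $XZ+ZX=\lambda\Id$ one shows this span is at most two-dimensional ($XZ\ket v = \lambda\ket v - ZX\ket v$, and $ZX\ket v$ reduces via the same relations). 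Iterating this ``pick a vector, close up under $X,Z$'' procedure exhausts $\cH_\lambda$ by mutually orthogonal $X,Z$-invariant subspaces of dimension $\le 2$; orthogonality of successive pieces follows because $X,Z$ are Hermitian, so the orthogonal complement of an invariant subspace is again invariant.

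Collecting the pieces over all eigenvalues $\lambda$ of $C$ gives a direct-sum decomposition $\cH = \bigoplus_j \cH_j$ into subspaces of dimension at most $2$, each invariant under both $X$ and $Z$; choosing an orthonormal basis adapted to this decomposition makes both operators block-diagonal with blocks of size $2\times 2$ at most, which is the claim. The main obstacle is purely bookkeeping: making sure the successive invariant subspaces extracted within a fixed $C$-eigenspace are genuinely orthogonal and collectively span the whole space, so that one truly obtains a basis rather than merely an invariant flag. This is handled by the Hermiticity of $X$ and $Z$ (invariance of a subspace implies invariance of its orthogonal complement) together with a straightforward finite/transfinite induction on dimension. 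Note that we do not even need the eigenvalues of the blocks to be $\pm1$ individually; only the global block structure is asserted, so no further normalization is required. (An alternative, equally short proof writes $X = P - P^\perp$, $Z = Q - Q^\perp$ for projectors $P,Q$ and applies Halmos' two-projections theorem, which directly yields the $2\times 2$ block form; either route is acceptable.)
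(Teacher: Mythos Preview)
The paper does not actually prove this lemma; it is stated with a citation to~\cite{Scarani12} and then used as a black box. So there is no in-paper proof to compare against, and your proposal must be judged on its own merits.

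Your overall strategy via the central element $C=XZ+ZX$ is sound, and the alternative route through Halmos' two-projections theorem would indeed work. However, the main argument has a genuine gap. You claim that for an arbitrary unit vector $\ket{v}$ in an eigenspace $\cH_\lambda$ of $C$, the span of $\{\ket{v},X\ket{v},Z\ket{v},XZ\ket{v}\}$ is at most two-dimensional. This is false: the relations $X^2=Z^2=\Id$ and $XZ+ZX=\lambda\Id$ only cut the algebra generated by $X,Z$ down to four dimensions (basis $\Id,X,Z,XZ$), so a cyclic module can have dimension up to four. Concretely, take $X=\sigma_x\oplus\sigma_x$ and $Z=\sigma_z\oplus\sigma_z$ on $\mathbb{C}^2\oplus\mathbb{C}^2$ (so $\lambda=0$) and $\ket{v}=e_1\oplus e_2$; then $\ket{v},X\ket{v},Z\ket{v},XZ\ket{v}$ are linearly independent. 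Your parenthetical ``$ZX\ket{v}$ reduces via the same relations'' does not go through, because no relation expresses $ZX\ket{v}$ as a combination of $\ket{v},X\ket{v},Z\ket{v}$ alone.

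The fix is standard and easy: do not pick $\ket{v}$ arbitrarily, but choose it to be an eigenvector of $X$ inside $\cH_\lambda$ (possible since $X$ is Hermitian and preserves $\cH_\lambda$). Then $X\ket{v}=\pm\ket{v}$, and the at-most-two-dimensional space $\mathrm{span}\{\ket{v},Z\ket{v}\}$ is already invariant: $Z^2\ket{v}=\ket{v}$ and $XZ\ket{v}=\lambda\ket{v}-ZX\ket{v}=\lambda\ket{v}\mp Z\ket{v}$. With this choice of seed vector, the rest of your argument (Hermiticity gives invariance of orthogonal complements, iterate) goes through cleanly.
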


This directly implies that if a Bell operator $\cB$ consists of binary measurements then the operator $\cB$ corresponding to the Bell test can be written as
\begin{align}
    \cB = \bigoplus_{\alpha_1}\bigoplus_{\alpha_2}\hdots\bigoplus_{\alpha_N}\cB_{\alpha_1\dots \alpha_N} \, .
\end{align}
In this expression, $\cB_{\alpha_1\dots \alpha_N}$ is a $N$-qubit Bell operator, and the indices $\alpha_1,\hdots,\alpha_N$ denote the block of parties $1,\hdots,N$ respectively. Therefore, the expectation value of the Bell operator is given by
\begin{align}
    \beta = \sum_{\alpha_1,\dots,\alpha_N} \tr{\varrho_{\alpha_1\dots \alpha_N}\cB_{\alpha_1\dots \alpha_N}} \, ,
\end{align}
where $\varrho_{\alpha_1\dots \alpha_N}$ is an unnormalized $N$-qubit state corresponding to the projection of the global state  $\rho$ onto the blocks $(\alpha_1\dots \alpha_N)$. We can add the normalization by writing 
\begin{align}
    \beta &= \sum_{\alpha_1,\dots,\alpha_N} p_{\alpha_1\dots \alpha_N} \tr{\rho_{\alpha_1\dots \alpha_N}\cB_{\alpha_1\dots \alpha_N}} \nonumber\\
                     &= \sum_{\alpha_1,\dots,\alpha_N} p_{\alpha_1\dots \alpha_N} \beta_{\alpha_1\dots \alpha_N}(\rho_{\alpha_1\dots \alpha_N})
\end{align}
where $\rho_{\alpha_1\dots \alpha_N} = p_{\alpha_1\dots \alpha_N}^{-1}\varrho_{\alpha_1\dots \alpha_N}$.

This  allows to reduce the device independent certification of $N$-qubit states to a $N$-qubit problem, as we now explain.

\begin{proposition}
Consider a Bell operator of the block-diagonal form 
\begin{align}
    \cB = \bigoplus_{\alpha_1}\bigoplus_{\alpha_2}\hdots\bigoplus_{\alpha_N}\cB_{\alpha_1\dots \alpha_N}\,
\end{align}
with $\alpha_i$ labeling the block of each party $i=1\dots N$, a global state $\rho$ yielding the Bell value $\beta=\tr{\rho \cB}$, and a convex function $f$. If for each party and each subspace (block) there exist a fixed local isometry $\Phi_{\alpha_i}$, such that the overlap between any possible state $\tau$ supported on any $(\alpha_1\dots \alpha_N)$-block and the target state $\overline \rho$ is bounded by the Bell value $ \beta_{\alpha_1\dots \alpha_N}(\tau)=\tr{\tau \cB_{\alpha_1\dots \alpha_N}}$ via
\be\label{eq:Nqbound}
O\Big((\Phi_{\alpha_1}\otimes\hdots \otimes \Phi_{\alpha_N})[\tau], \overline \rho\Big)\geq f( \beta_{\alpha_1\dots \alpha_N}(\tau)),
\ee
then there is an isometry $\Phi= \Phi_1\otimes\Phi_2\otimes\hdots\Phi_N$ for which the global state satisfies
\be\label{eq:convexbound}
 O(\bm{{\Phi}}[\rho],\overline \rho)\geq f(\beta).
\ee

\label{prop:Nqubit}
\end{proposition}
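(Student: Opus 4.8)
The plan is to construct the global isometry block-wise and then aggregate the block-wise overlap bounds using the convexity of $f$ together with Jensen's inequality. First I would define the global isometry on each party $i$ by setting it to act as $\Phi_{\alpha_i}$ on the $\alpha_i$-th Jordan block; since the blocks furnish an orthogonal decomposition of party $i$'s Hilbert space, this is a well-defined isometry $\Phi_i$, and the product $\bm{\Phi}=\Phi_1\otimes\dots\otimes\Phi_N$ respects the block structure: it maps the unnormalized component $\varrho_{\alpha_1\dots\alpha_N}$ into $(\Phi_{\alpha_1}\otimes\dots\otimes\Phi_{\alpha_N})[\varrho_{\alpha_1\dots\alpha_N}]$, with different blocks landing in orthogonal sectors. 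Consequently, because the overlap $O(\cdot,\overline\rho)=\Tr{(\cdot)\,\overline\rho}$ is linear, the global overlap decomposes as a weighted sum over blocks,
\begin{align}
O(\bm{\Phi}[\rho],\overline\rho)=\sum_{\alpha_1,\dots,\alpha_N} p_{\alpha_1\dots\alpha_N}\, O\big((\Phi_{\alpha_1}\otimes\dots\otimes\Phi_{\alpha_N})[\rho_{\alpha_1\dots\alpha_N}],\overline\rho\big).\nonumber
\end{align}

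Next I would apply the hypothesis \eqref{eq:Nqbound} to each normalized block state $\tau=\rho_{\alpha_1\dots\alpha_N}$, which is indeed supported on the $(\alpha_1\dots\alpha_N)$-block, giving $O\big((\Phi_{\alpha_1}\otimes\dots\otimes\Phi_{\alpha_N})[\rho_{\alpha_1\dots\alpha_N}],\overline\rho\big)\geq f(\beta_{\alpha_1\dots\alpha_N})$. Substituting into the decomposition above yields
\begin{align}
O(\bm{\Phi}[\rho],\overline\rho)\ \geq\ \sum_{\alpha_1,\dots,\alpha_N} p_{\alpha_1\dots\alpha_N}\, f(\beta_{\alpha_1\dots\alpha_N})\ \geq\ f\Big(\sum_{\alpha_1,\dots,\alpha_N} p_{\alpha_1\dots\alpha_N}\,\beta_{\alpha_1\dots\alpha_N}\Big)=f(\beta),\nonumber
\end{align}
where the second inequality is Jensen's inequality for the convex function $f$ (the $p_{\alpha_1\dots\alpha_N}$ form a probability distribution since $\Tr\rho=1$ and the blocks partition the space), and the final equality is precisely the decomposition $\beta=\sum p_{\alpha_1\dots\alpha_N}\beta_{\alpha_1\dots\alpha_N}$ established just before the proposition.

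The only subtle points, rather than genuine obstacles, are bookkeeping ones: checking that $\bm{\Phi}$ is a legitimate isometry — it is, because gluing isometries across an orthogonal direct-sum decomposition produces an isometry, and a tensor product of isometries is an isometry — and verifying that the cross terms between distinct blocks vanish when computing $O(\bm{\Phi}[\rho],\overline\rho)$, which holds because $\bm{\Phi}$ sends different blocks into mutually orthogonal subspaces so that $\bm{\Phi}[\rho]=\sum_{\alpha_1,\dots,\alpha_N}(\Phi_{\alpha_1}\otimes\dots\otimes\Phi_{\alpha_N})[\varrho_{\alpha_1\dots\alpha_N}]$ with orthogonal supports, and only the diagonal blocks contribute to the trace against the fixed target $\overline\rho$ after one notes that off-diagonal contributions are traceless. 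One should also remark that the isometries $\Phi_{\alpha_i}$ are allowed to depend on the block index but not on the (unknown) state, which is exactly what makes the final $\bm{\Phi}$ independent of $\rho$; this is the property invoked in the main text to lift the bound to arbitrary Hilbert-space dimension.
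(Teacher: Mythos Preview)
Your proposal is correct and follows essentially the same argument as the paper: define $\Phi_i=\bigoplus_{\alpha_i}\Phi_{\alpha_i}$, use linearity of the Hilbert--Schmidt overlap to decompose $O(\bm{\Phi}[\rho],\overline\rho)$ as the convex combination $\sum p_{\alpha_1\dots\alpha_N}\,O((\Phi_{\alpha_1}\otimes\dots\otimes\Phi_{\alpha_N})[\rho_{\alpha_1\dots\alpha_N}],\overline\rho)$, apply the block-wise hypothesis, and finish with Jensen's inequality for the convex function $f$. The additional bookkeeping remarks you make (the direct sum of isometries is an isometry, cross-block terms do not contribute) are implicit in the paper's use of the direct-sum notation and do not constitute a different route.
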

\begin{proof}
For each party, let us define the isometry $\Phi_i = \bigoplus_{\alpha_i}\Phi_{\alpha_i}$, which independently maps the state in each block $\alpha_i$ onto the output (qubit) subsystem.  Under these isometries the overlap of the global state $\rho$ with the target state $\overline \rho$ satisfies 

\begin{align}
    O(&(\Phi_1\otimes\Phi_2\otimes\hdots\Phi_N)[\rho],\overline \rho) \nonumber\\
    &= O\Big((\bigoplus_{\alpha_1,\dots, \alpha_N} \Phi_{\alpha_1}\tot\hdots\!\otimes\Phi_{\alpha_N}) [\rho], \overline \rho \Big) \nonumber\\
    &= \sum_{\alpha_1,\dots, \alpha_N} p_{\alpha_1\dots \alpha_N} O\Big((\Phi_{\alpha_1}\tot\hdots\!\otimes\Phi_{\alpha_N}) [\rho_{a_1\dots a_N}],\overline\rho\Big) \nonumber\\
    &\geq \sum_{\alpha_1,\dots, \alpha_N} p_{\alpha_1\dots \alpha_N}f\Big(\beta_{\alpha_1\dots \alpha_N}(\rho_{\alpha_1\dots \alpha_N})\Big) \nonumber\\
    &\geq f\Big(\sum_{\alpha_1,\dots,\alpha_N} p_{\alpha_1\dots \alpha_N}\beta_{\alpha_1\dots \alpha_N}(\rho_{\alpha_1\dots \alpha_N})\Big) = f(\beta) \,,
\end{align}
where we used the linearity of the Hilbert-Schmidt product and the convexity of $f$.
\end{proof}
Hence, a solution of the $N$-qubit problem in the form of a convex bound as in Eq.~\eqref{eq:Nqbound}, can be directly applied to device independently certify any state $\rho$ from the observed Bell value $\beta$.
 This problem can be solved numerically, as we now show.\\


\textbf{Appendix D.3} \textit{Convex bound on the $N$-qubit fidelity--} First let us fix the form of the local observable $X_i$ and $Z_i,$ $i \in \{1,2,\hdots, N\}$. After applying Jordan's lemma and choosing a suitable coordinate system (via a local basis change for each block that is absorbed into the isometry), we can express the two observables of party $i$ in the block $\alpha_i$ as 
\begin{equation}
\begin{cases}
 X_{i,a_{\alpha_i}} = \cos(a_{\alpha_i})\sigma_x +  \sin(a_{\alpha_i})\sigma_z\\ 
 Z_{i,a_{\alpha_i}} = \cos(a_{\alpha_i})\sigma_x -  \sin(a_{\alpha_i})\sigma_z
 \end{cases}
 \label{eq:observables}
\end{equation}
where $\sigma_x$ and $\sigma_z$ are Pauli matrices, $r\in\{0,1\}$ and $a_{\alpha_i}\in[0,\frac{\pi}{2}]$. Each Bell operator $\cB_{\alpha_1\dots \alpha_N}=\cB(a_{\alpha_1},\dots,a_{\alpha_N})$ only involves terms of the form \eqref{eq:observables} and is uniquely specified by the angles $a_{\alpha_i}$. 

In order to apply the Proposition\ref{prop:Nqubit} we first fix the dependence of the local isometries on the angles $\Phi_{\alpha_i} = \Phi(a_{\alpha_i})$. Then we lower bound the overlap $O\Big((\Phi_{a_1}\otimes\hdots\otimes\Phi_{a_N})[\tau],\overline{\rho}\Big)$ between the target state $\overline \rho$ and all
$N$-qubit states $\tau$ whose Bell value $\tr{\tau \cB(a_{\alpha_1},\dots,a_{\alpha_N})}$ exceeds a certain value $\beta'$. 
This implies the following optimization:

\begin{align}
    O_\t{min}(\beta') = \min\limits_{\tau,a_1,\dots, a_N} & O\Big(\big(\Phi(a_1)\otimes\hdots\otimes\Phi(a_N)\big)[\tau],\overline{\rho}\Big)\nonumber\\
    \text{ s.t.\ \ }& \tr{\tau\cdot\cB(a_1,\dots,a_N)} \geq \beta' \nonumber\\
    &\tau \geq 0 \nonumber\\
    &\tr{\tau} = 1 \nonumber\\
    &\tau^{\dagger} = \tau \, .
    \label{eq:optimization}
\end{align}
For simplicity, we dropped the box index $\alpha_i \rightarrow i$ in these expressions.

If we fix the values of $\beta'$ and the angles $a_1,\dots,a_N$, the problem reduces to a linear optimization, which can be done very efficiently using semidefinite programming for example. So, we use the following strategy: For every $\beta'$, we fix $a_1,\dots,a_N$, we run the linear optimization and then minimize the overlap over the angles. Finally, the convex function $f(\beta)$ of Eq.~\eqref{eq:convexbound} is obtained as the convex roof of $O_\t{min}(\beta')$.\\

The optimization~\eqref{eq:optimization} is then directly formulated 
\begin{align}
O\Big(\big(\Phi(a_1)\otimes\hdots\otimes\Phi(a_N)\big)[\tau],\overline{\rho}\Big)=\nonumber\\
\t{Tr}\Big(\big(\Lambda(a_1)\otimes\hdots\otimes\Lambda(a_N)\big)[\tau] \cdot \overline{\rho}\Big),
\end{align}\\
in terms of the extraction maps $\Lambda(a_i)[\cdot]=\t{tr}_\t{ext}\Phi(a_i)[\cdot]$ induced by the isometries. The resulting $\Lambda(a_i)$  is some completely positive trace preserving (CPTP) map on a single qubit. Reciprocally any CPTP map can be dilated to an isometry, hence choosing the dependence of the local isometries $\Phi(a_i)$ on the angle amounts to choose a parametric family of extraction maps $\Lambda(a_i)$. 

To run the optimization, it remains to chose the dependence of the extraction maps $\Lambda(a_i)$ on the angles $a_i$. To do so, we follow the analytic studies presented in~\cite{Kaniewski16}, where single-qubit dephasing channels are used:
\begin{align}
    \Lambda(a)[\rho] := \frac{1+g(a)}{2}\rho + \frac{1-g(a)}{2}\Gamma_a\rho\Gamma_a \, ,
    \label{eq:Dephasing}
\end{align}
where $g(a) = (1+\sqrt{2})(\cos(a)+\sin(a)-1)$ and 
\begin{align}
    \Gamma_a = 
    \begin{cases}
        \sigma_x &\text{ for } a\in[0,\frac{\pi}{4}] \, , \\
        \sigma_z &\text{ for } a\in\,]\frac{\pi}{4},\frac{\pi}{2}] \, . 
    \end{cases} 
\end{align}

So far we have assumed that we know the Bell test  that is suitable for the robust certification of the target state $\overline \rho$. We will now give a hint on how, given a target state, the research of such a Bell test can be tackled. \\

\textbf{Appendix D.4} \textit{Devising Bell tests tailored to the robust certification of $N$-qubit states --} In this appendix, we briefly describe the approach that we used to devise Bell tests tailored to the certification of 4-qubit state family $\ket{\xi_\varphi}$. While, in principle, this approach can be used for any $N$-qubit state, its convergence is not guaranteed. In any case, it provides necessary conditions that have to be fulfilled by a Bell test to be suitable for the certification of a given state, drastically limiting the set of potential candidates. The full details will be described elsewhere~\cite{Inprep}.

Since we want the extracted fidelity to be one in the ideal case, the state $\overline \rho$ has to be the unique state  that attains the maximal quantum value of the Bell test for a  given realization of the observable. Hence, the simplest necessary condition for a Bell test to be suitable for the certification of a given state, is that this state gives the maximal Bell value for a fixed realization of the observables. As a first step, we thus consider a set of Hermitian operators that have the state $\ket{\xi_\varphi}$ for unique maximal eigenstate. Such a set of operators can be obtained by applying the gate $CU_\varphi$ on the convex sums of stabilizers of the initial Bell pairs $\ket{\phi^+_2}\ket{\phi^+_2}.$ Trivially, any such operator, except those for which the weights of some stabilizers are identically zero, has the state  $\ket{\xi_\varphi}$ as unique maximal eigenstate with eigenvalues 1. Since we want to use the Jordan lemma, we furthermore restrict ourselves to operators that can be expressed as a linear combination of correlators that only contain identity and two other Paulis per party. The Jordan lemma ensures that any such Bell test attains its maximal value for the case of qubits, i.e. when the boxes correspond to some Pauli matrices and the state is a four-qubit state. Note that different expansions of an operator in correlators (related by local rotations of the Paulis) correspond to different Bell tests. 
Similarly, a Bell test, seen as a sum of correlators, can correspond to different Bell operators, via different assignments of operators to the measurement boxes. If the aim of this test is to certify the target state, not only the state has to be the maximal eigenstate for a fixed assignment (a fixed Bell operator), but the corresponding eigenvalue, that we conveniently set to $\lambda=1$, has to be globally maximal.
Hence, as a second step, we discard the operators among all the candidates to the Bell test whose maximal eigenvalue is not a local maximum with respect to small perturbations of the measurement boxes\footnote{In our case of complementary qubit measurements $X$ and $Z$ the perturbation is  $X\to X(1-\frac{\delta^2}{2})+\delta Z$ and $Z\to Z(1-\frac{\delta^2}{2})+\delta X$.}. This can be done by standard second order eigenvalue perturbation methods: check that $\lambda$ is not perturbed at first order and all second order terms are negative. Moreover, the sensitivity of maximal eigenvalue to small misalignments of the measurement boxes (the negativity of the second order derivatives of $\lambda$) is a good indication of the fact that the Bell value is the most sensitive to a drop in state's fidelity. This is precisely what is required for a robust certificate. Therefore, as a last step we look for the Bell test for which the maximal eigenvalue is the most sensitive to perturbations in all possible directions.\\
\begin{widetext}
The resulting family of Bell tests for $\ket{\xi_\varphi}$ reads

\begin{small}
\begin{align}
\label{eq:B phi}
   \nonumber B_\varphi = \frac{1}{20}\Big( &2\sqrt{2} (E_{XX11}+E_{ZX11})+ s_\varphi (E_{X1ZX}- E_{X1XX}-E_{X1XZ}- E_{X1ZZ} +E_{Z1ZX}- E_{Z1XX}-E_{Z1XZ}- E_{Z1ZZ})
\\ \nonumber +& c_\varphi \sqrt{2}(E_{11ZX}-E_{11XZ}+E_{11XX}+E_{11ZZ})+ s_\varphi\sqrt{2} (E_{1XZX}- E_{1XXX}-E_{1XXZ}- E_{1XZZ}) 
\\ \nonumber +& c_\varphi (E_{XXZX}-E_{XXXZ}+E_{XXXX}+E_{XXZZ} +E_{ZXZX}-E_{ZXXZ}+E_{ZXXX}+E_{ZXZZ})\\
+& 2 (E_{ZZZX}-E_{ZZXZ}+E_{ZZXX}+E_{ZZZZ} -E_{XZZX}+E_{XZXZ}-E_{XZXX}-E_{XZZZ})
\Big)
\end{align}
\end{small}
with $c_\varphi = \cos(\varphi)$, $s_\varphi = \sin(\varphi)$ and correlators defined for the four parties ordering $B^{(1)}A^{(1)}A^{(2)}B^{(2)}$
\be
E_{M_1 M_2 M_3 M_4}= \sum_{b^{(1)},a^{(1)},a^{(2)},b^{(2)}} (-1)^{b^{(1)}+a^{(1)}+a^{(2)}+b^{(2)}} \t{P}\left(b^{(1)},a^{(1)},a^{(2)},b^{(2)}|M_1 M_2 M_3 M_4\right).
\ee
Here $a^{(i)}, b^{(i)}=\pm 1$ label the measurement outcomes for each party, and  $M_{k} =X$ or $Z$ labels the two possible measurement settings for party $k.$ Whenever the setting label is $M_k=1$ the corresponidng party does not measure, that is $E_{M_1 M_2 11}= \sum_{b^{(1)},a^{(1)}} (-1)^{b^{(1)}+a^{(1)}} \t{P}\left(b^{(1)},a^{(1)}|M_1 M_2\right)$. 
For the assignements
\begin{center}
\begin{tabular}{c c c  c}
$M_1$: & $X = -\frac{1}{\sqrt{2}}(\sigma_x + \sigma_z)$     & $Z = \frac{1}{\sqrt{2}}(\sigma_x - \sigma_z)$ & for party $B^{(1)}$ \\
$M_2$: & $X = -\frac{1}{\sqrt{2}}(\sigma_x + \sigma_z)$     & $Z = \frac{1}{\sqrt{2}}(\sigma_x - \sigma_z)$ & for party $A^{(1)}$  \\
$M_3$: & $X = c_\varphi \sigma_y +s_\varphi \sigma_z$ & $Z = -s_\varphi \sigma_y +c_\varphi \sigma_z$ & for party $A^{(2)}$\\
$M_4$: & $X= -\sigma_y$     & $Z = \sigma_z$ & for party $B^{(2)}$ \\
\end{tabular}
\end{center}
$B_\varphi$ attains the maximal quantum value $B_\varphi=1$ for the state $\ket{\xi_\varphi} = CU_\varphi \ket{\phi^+_2}\ket{\phi^+_2}$. Recall that the gate is applied on $A^{(1)}$ holding the control qubit and $A^{(2)}$ holding the target one. The maximally entangled states $\ket{\phi^+_2}=\frac{1}{\sqrt{2}}(\ket{00}+\ket{11})$ are shared between $A^{(1)}$ with $B^{(1)}$ and $A^{(2)}$ with $B^{(2)}$. A curious reader might be also interested to learn that the local bounds for the Bell tests $B_\varphi$ are given by
\be
B_\varphi \leq \frac{(\sqrt{2}+1)(c_\varphi + s_\varphi) + 2}{5 \sqrt{2}} \in [0.62, 0.77].
\ee
\end{widetext}

\textbf{Appendix D.5} \textit{Results--} Below we present the results of the optimization for the states discussed in the main text.

\textit{Maximally entangled two qubit state--} The maximally entangled two qubit states $\ket{\phi^+_2}=\frac{1}{\sqrt{2}}(\ket{00}+\ket{11})$ has the two following stabylizers $\sigma_x\otimes \sigma_x$ and $\sigma_z \otimes \sigma_z$ in the X-Z plane. Any operator 
\be
\cB(p) =p\, \sigma_x\otimes \sigma_x+(1-p)\sigma_z \otimes \sigma_z
\ee
with $0<p<1$ has $\ket{\phi^+_2}$ as the unique maximal eigenstate with eigenvalue $1$. Applying the approach of the previous section to the operators $\cB(p)$, we find that the Bell test
\be
\frac{1}{2\sqrt{2}}(E_{XX} + E_{XZ}+ E_{ZX}-E_{ZZ})
\ee
with $E_{M_A M_B} =\sum_{a,b=\pm 1}(-1)^{a+b}\,\t{P}\left(a,b |M_A,M_B\right)$,
is the most sensitive to the perturbation of the boxes. 
The latter corresponds to the operator $\cB(\frac{1}{2})$ for the assignments $X =\sigma_x$ and $Z=\sigma_z$ for Alice's measurement $M_A$, and  $X=\frac{1}{\sqrt{2}}(\sigma_x+\sigma_z)$ and $Z=\frac{1}{\sqrt{2}}(\sigma_x-\sigma_z)$ for Bob's measurement $M_B$.
This Bell test is proportional to the well-known CHSH inequality~\cite{CHSH69}
\be
\t{CHSH} =E_{XX} + E_{XZ}+ E_{ZX}-E_{ZZ}.
\ee
Let us now illustrate our numerical optimization method on the CHSH inequality. The results we find match the already known analytical result of Ref.~\cite{Kaniewski16}, highlighting the validity and applicability of our approach.

Eq.~\eqref{eq:observables} ensures that Alice's obsevables $X_1, Z_1$ on her $k$-block and Bob's observables $X_2, Z_2$ on his $l$-block can be written as 
\begin{align}
&\begin{cases}
    X_{1,k} &= \cos(a_k)\sigma_x + \sin(a_k)\sigma_z \\ 
    Z_{1,k} &= \cos(a_k)\sigma_x - \sin(a_k)\sigma_z
    \end{cases}\\
    &\begin{cases}
    X_{2,l} &= \cos(b_l)\sigma_x + \sin(b_l)\sigma_z \\
    Z_{2,l} &= \cos(b_l)\sigma_x - \sin(b_l)\sigma_z.\\
    \end{cases}
\end{align}
The two-qubit CHSH operator on the $(k,l)$-block is thus expressed as
\begin{align}
    \cB(a_k,b_l) = 2[&(\cos(a_k)\sigma_x + \sin(a_k)\sigma_z)\otimes\cos(b_l)\sigma_x + \nonumber\\ 
    &(\cos(a_k)\sigma_x - \sin(a_k)\sigma_z)\otimes\sin(b_l)\sigma_z] \, .
\end{align}
Up to local unitaries $\ket{\phi_2^+}$  eigenstate of $\cB(\frac\pi4,\frac\pi4)$ corresponding to the largest eigenvalue of $2\sqrt{2}$.

\begin{figure}
    \centering
    \includegraphics[width=\linewidth]{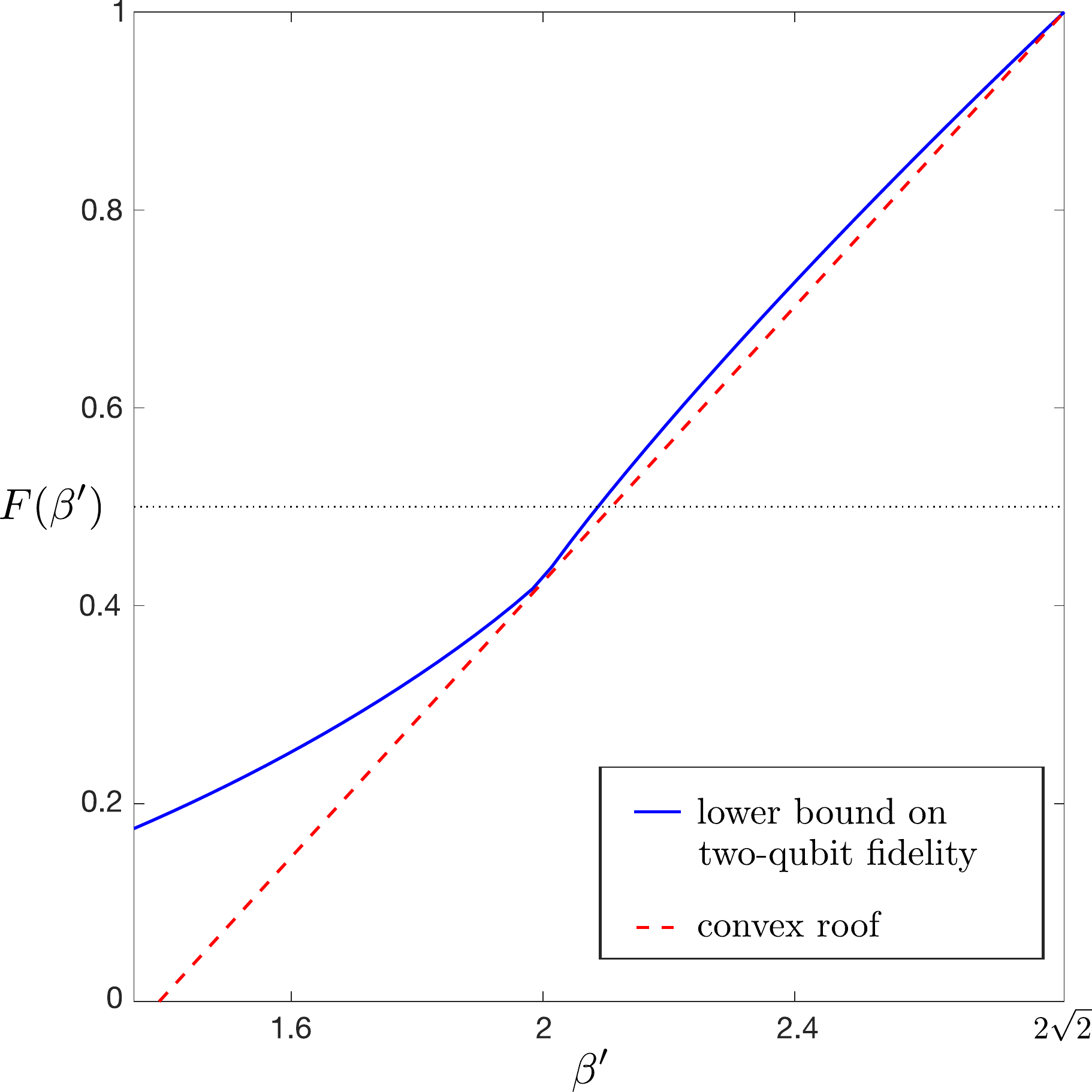}
    \caption{Application of our numerical approach to CHSH. The blue line is the result of the 2-qubit optimization. The red dashed line is the convex roof and therefore a valid device-independent lower bound on the fidelity of a two-qubit maximally entangled state. The black dashed-dotted line is the non-trivial bound of $\frac12$.}
    \label{fig:CHSH1}
\end{figure}

With the extraction maps of Eq.~\eqref{eq:Dephasing} we do the optimization~\eqref{eq:optimization} for $\beta'\in[1.35,2\sqrt{2}]$. We chose this interval to cover the whole range of fidelities $[0,1]$. As depicted in Fig.~\ref{fig:CHSH1}, the obtained bound $O_\t{min}(\beta')$ is not a convex function (solid line), but its convex roof (straight dashed line) is easy to obtain and is given by
\be
f(\overline \beta)= \frac12 \left(1+\frac{\overline \beta-\beta^*}{2\sqrt{2}-\beta^*}\right)
\ee 
with $\beta^*\approx 2.11$. Applying the Proposition \ref{prop:Nqubit} and the Lemma \ref{lemma:overlap},
we conclude that a CHSH value $\beta$ implies a lower bount on  the fidelity 
\begin{align}
    F(\t{Tr}_\t{ext}(\bm{{\Phi}}[\rho]),\prjct{\phi^+_2}) \geq \sqrt{\frac12 \left(1+\frac{\beta-\beta^*}{2\sqrt{2}-\beta^*}\right)}.
\end{align}


\textit{Four qubit entangled states $\ket{\xi_\varphi}$--} We follow the same steps for the states $\ket{\xi_\varphi}$ and corresponding Bell tests $B_\varphi$. This allows us to conclude that the Bell value $\gamma$ leads to the lower bound
\begin{align}
    F(\t{Tr}_\t{ext}(\bm{{\Phi}}[\rho]),\prjct{\xi_\varphi}) \geq \sqrt{\frac12 \left(1+\frac{\gamma-\gamma^*_\varphi}{1-\gamma^*_\varphi}\right)},
\end{align}
where the constant $\gamma^*_\varphi$ depends on the parameter $\varphi$. For the ten values of $\varphi$ \ that we analyzed we find that $\gamma_\varphi^*\in [0.8, 0.85]$, with the best case $\gamma^*_{\pi}\leq 0.795$ that corresponds to the CNOT gate, and the worst case $\gamma^*_{\frac \pi 2}\leq 0.85$ corresponding to the gate $CU_{\frac \pi 2},$ that is, the square root of a CNOT.

\vspace{20 pt}

\noindent \textbf{Appendix E} \textit{Relation to the work by Magniez et al. \cite{Magniez06}--} 
Following the questions of referees, we include here a detailed comparison between our contributions and the work by Magniez et al. \cite{Magniez06}. While the general context of both papers is the same, we believe that our results and motivations are indeed very different.

1. Our goal is to provide operational bounds for the certification of building blocks of quantum computers. Hence, our certification of a device, formalized in Eq.~\eqref{correct_def}, gives a recipe of how to use a black box in order to perform the desired quantum operation on well identified subsystems with predefined Hilbert space dimensions. In the non-ideal case
where the black box does not perform exactly as the reference operation,
we directly give a bound on the fidelity of the actual operation with respect
to the target one. The notion of equivalence in \cite{Magniez06} is defined in the other way around. It establishes that the operation performed by a black box can be reproduced by the ideal target operation if the latter is supplemented with auxiliary systems and unitary operations. In the ideal case, the two formulations can be shown to be equivalent. In the non-ideal case, however, the figure of merit proposed in~\cite{Magniez06} cannot be easily used to certify the usefulness of the black box for a particular task.

2. The strategies proposed to obtain a certificate on a device $\cE$ are different in our paper and in Ref. \cite{Magniez06}. In the latter, three measurement settings per parties are required, and one collects all the measurement statistics for the states
$\ket{\phi^+}$, $\cE \otimes\Id \ket{\phi^+}$ and $\cE \otimes \cE \ket{\phi^+}$. In particular, this requires two devices acting similarly to the target channel $\overline\cE$ in order to certify one of them. A device must then be used by each protagonist $\sA$ and $\sB$. In our case, two settings per party are required and we can certify a channel even if we only have a single instance of it: any two certificates for states $\ket{\phi^+}$ and $\cE\otimes \Id \ket{\phi^+}$ obtained with the same
measurements on side $\sB$ can be combined to certify the device $\cE$ itself.

3. The notion of “robustness” or "tolerance to noise" does not have the same meaning in our
paper and in Ref. \cite{Magniez06}. In the latter, this notion has an
information theoretic meaning. It is shown that a deviation of the statistics
by $\epsilon$ leads to a bound on the deviation of the quantum description of the black box
from the reference (tensor with something else) at the order $O(\sqrt{\epsilon})= c
\sqrt{\epsilon}$. The authors however, say nothing about the constant $c.$ This makes such a robustness statement simply unusable for any practical application. To our knowledge, the constant here is still unknown. Other self-testing results following a similar line of thought were shown to be extremely sensitive to noise. For instance, the self-testing result of~\cite{McKague12} becomes trivial for $\epsilon \simeq 4\times 10^{-5}$. The one-copy singlet state certification of~\cite{Reichardt13}, stated in lemma 4.2 of~\cite{Reichardt13bis}, is also only relevant for $\epsilon\lesssim 10^{-5}$. This estimation does not account for the overhead of the full protocol presented there which, constants aside, needs to be repeated an order of $n^{8192}$ where $n$ is the number of gates in the circuit (note that this last scaling was improved in more recent works~\cite{Hajdusek15, McKague12, Gheorghui15, Coladangelo17b}). Such robustness is currently irrelevant for any practical purpose.

With our approach, we show that a nontrivial certificate on a black box can
be obtained even with 1\% white noise on each element of the setup, see
Fig.~\ref{fig:WhiteNoise}. Recently, self-testing results with a similar noise resistance led to the first experimental self-testing of a quantum state \cite{Tan17} (a two-qubit maximally-entangled state).

\end{document}